\documentclass[sigconf,natbib=false]{acmart}
\usepackage{listings}
\usepackage{amsthm}
\usepackage{caption}
\usepackage{subcaption}
\newcommand*{\defeq}{\stackrel{\text{def}}{=}}
\usepackage{amsmath}
\AtBeginDocument{%
  }

\setcopyright{acmlicensed}
\copyrightyear{2018}
\acmYear{2018}
\acmDOI{XXXXXXX.XXXXXXX}

\RequirePackage[
  datamodel=acmdatamodel,
  style=acmnumeric,
  ]{biblatex}
\begin{document}

\title{Obligation-Producing Actions in Situation Calculus}

\author{Kalonji Kalala}
\email{hkalo081@uottawa.ca}
\orcid{0009-0002-7617-9337}
\authornotemark[1]
\affiliation{%
  \institution{University of Ottawa}
  \city{Ottawa}
  \state{Ontario}
  \country{Canada}
}

\author{Iluju Kiringa}
\email{iluju.kiringa@uottawa.ca}
\affiliation{%
     \institution{University of Ottawa}
    \city{Ottawa}
    \state{Ontario}
    \country{Canada}
}
\author{Tet Yeap}
\email{tyeap@uottawa.ca}
\affiliation{%
   \institution{University of Ottawa}
  \city{Ottawa}
  \state{Ontario}
  \country{Canada}
}

\renewcommand{\shortauthors}{Kalonji Kalala et al.}

\begin{abstract}
This paper proposes a Situation Calculus solution to the frame problem for obligation-producing actions which are actions that create obligations on the part of the agent that performs them. As an example of such actions, we have an opening door action performed by an agent  which has the subsequent obligation of getting the door closed. 
Demolombe and others extend Raymond Reiter’s solution to the frame problem for ordinary actions to accommodate obligation-producing actions. Obligation-producing actions do affect the truth value of a newly introduced fluent that captures the accessibility relation used in semantics of obligation modalities in the Situation Calculus.  Our work simplifies Demolombe’s characterization of the accessibility relation by eliminating the notion of ideality of situations, thereby remaining close to Kripke-style possible-world semantics for deontic logic, in the spirit of Governatori’s approach. Furthermore, we spell out details of a complete solution by extending basic action theories of Reiter to the new setting. Finally, we extend Reiter's regression operator for reasoning about actions back to the initial situation to this new setting. Our solution yields intuitive properties that one would expect from obligations: for example, if a sentence  is obligatory to an agent in a given situation, it remains so in subsequent situations, unless the obligation is explicitly stopped.
\end{abstract}

\begin{CCSXML}
<ccs2012>
 <concept>
  <concept_id>00000000.0000000.0000000</concept_id>
  <concept_desc>Do Not Use This Code, Generate the Correct Terms for Your Paper</concept_desc>
  <concept_significance>500</concept_significance>
 </concept>
 <concept>
  <concept_id>00000000.00000000.00000000</concept_id>
  <concept_desc>Do Not Use This Code, Generate the Correct Terms for Your Paper</concept_desc>
  <concept_significance>300</concept_significance>
 </concept>
 <concept>
  <concept_id>00000000.00000000.00000000</concept_id>
  <concept_desc>Do Not Use This Code, Generate the Correct Terms for Your Paper</concept_desc>
  <concept_significance>100</concept_significance>
 </concept>
 <concept>
  <concept_id>00000000.00000000.00000000</concept_id>
  <concept_desc>Do Not Use This Code, Generate the Correct Terms for Your Paper</concept_desc>
  <concept_significance>100</concept_significance>
 </concept>
</ccs2012>
\end{CCSXML}

\ccsdesc[500]{Do Not Use This Code~Generate the Correct Terms for Your Paper}
\ccsdesc[300]{Do Not Use This Code~Generate the Correct Terms for Your Paper}
\ccsdesc{Do Not Use This Code~Generate the Correct Terms for Your Paper}
\ccsdesc[100]{Do Not Use This Code~Generate the Correct Terms for Your Paper}

\keywords{Situation Calculus, legal contract, Obligations, Frame Problem, Regression, Action, Fluent}


\maketitle

\section{Introduction}  \label{Introduction}
The Situation Calculus presented in~\cite{mccarthy1963situations} is a logical language that has been used in artificial intelligence for specifying and reasoning about dynamical systems such as robotics, database updates, control systems, and simulated software agents. Axioms are provided to capture the prerequisites of actions of the domain and the effects of these actions on the external world around the specified system. The effects of actions are captured by fluents,  predicates whose truth values are changed as the result of the performance of actions. In this context, the challenge of specifying axioms that describe what remains unchanged in a compact way is known as the frame problem in artificial intelligence. The seminal work reported in	 ~\cite{DBLP:conf/birthday/Reiter91} has proposed the so-called successor state axioms as a solution to the frame problem for actions that change the external world. In ~\cite{DBLP:conf/aaai/ScherlL93}, authors extend Reiter's solution to the frame problem to actions that change the state of the knowledge of a reasoning agent; these actions are called {\it  knowledge-producing actions}.  The work in~\cite{DBLP:conf/deon/DemolombeH04} gives the first formulation of Deontic Logic concepts in the Situation Calculus to capture an agent's set of obligations. The authors use the approach to knowledge  from~\cite{DBLP:conf/aaai/ScherlL93} by representing obligations through an appropriate first-order representation of the accessibility relation of the semantics of modal Deontic Logic.

This paper extends the ideas found in~\cite{DBLP:conf/deon/DemolombeH04} to a full-fledged solution. Demolombe et al. did not develop a complete framework stretching from a logic of obligation-producing actions to reasoning about these actions and implementing specifications written in the said logic of obligation-producing actions. 
We go further than the work in ~\cite{DBLP:conf/deon/DemolombeH04} by simplifying the formalization to keep it close to  the Kripke-style possible world semantics for deontic logic from \cite{Hintikka70} that was embedded  in the Situation Calculus by Scherl and Levesque. We simplify Demolombe’s axiom for characterizing the newly introduced accessibility relation by getting rid of the notion of ideality of situations. Furthermore, we spell out details of a complete solution by extending basic action theories of Reiter to the new setting. In addition, we extend Reiter's regression operator for reasoning about actions back to the initial situation to this new setting and we extend his  regression theorem to obligations. Finally, we show that the formalization yields intuitive properties that one would expect from obligations: for example, if obligatory sentences remain so in subsequent situations, unless the obligation is explicitly stopped. 

There are two main motivations for formalizing obligations in the Situation Calculus. The first motivation is of a purely scientific nature: namely, how to formalize logical modalities within the largely first-order framework of the Situation Calculus. Although this has been accomplished for the epistemic modality of knowledge, it remains not yet fully realized for the deontic modality of obligation. The second motivation for formalizing obligations is to enable the precise specification and automated reasoning of \textit{legal documents} and \textit{smart legal contracts}. A legal contract is a legally enforceable agreement that contains requirements for parties to engage in business transactions \cite{DBLP:journals/dss/Lee88}\cite{macleod20002007}. A smart contract is the logical implementation of a legal contract. These artifacts have become increasingly prominent in cyber-physical systems, where they serve as mechanisms for enforcing formally defined (and thus obligatory) agreements among interacting entities. By providing a rigid semantic foundation, formal models of obligations support verification, consistency checking, and the reliable execution of such agreements in dynamic and distributed environments (see, e.g.,~\cite{DBLP:conf/otm/KruijffW17}). 

\section{Formal Background} 
\label{sec:formalPreliminaries}

\subsection{Sequential and Temporal Situation Calculus}\label{SitCalc}

The Situation Calculus \cite{mccarthy1963situations,DBLP:conf/birthday/Reiter91} is a many-sorted and mostly first order language with equality specifically designed for representing dynamically changing world. 
We consider a version of the Situation Calculus with four sorts for actions ($\mathcal{A}$), situations ($\mathcal{S}$), time points ($\mathcal{T}$), 
and objects ($\mathcal{O}$) other that the first three sorts. \\
{\bf Actions} are first order terms consisting of an action function symbol and its arguments, one of which being the action occurrence time. Named actions are the cause of every change in the world. For example : 
\begin{itemize}
	\item \(unlock(d,t)\) : An action that  represents the fact of unlocking the door \(d\) at the time \(t\).
	\item \(lock(d,t)\) : An action that represents the fact of locking the door \(d\) at the time \(t\).
\end{itemize}
 Actions are executed by agents in the domain. \\
\noindent{\bf Situations} are first order terms denoting finite sequences of actions and represented by a binary function symbol $do$: $do(\alpha,s)$ denotes the sequence resulting from adding the action $\alpha$ to a sequence $s$. \(do(\alpha,s)\) denotes the successor situation to $s$ obtained by performing the action $\alpha$. For example 
\begin{itemize}
	\item \( do(unlock(D,10),S_0)) \) : Represents the next situation after executing the action \( unlock(D,10) \) in the situation $S_0$. $D$ is a specific door that is unlocked at point time 10. 
\end{itemize}

\noindent The constant $S_0$ ({\bf initial situation}) denotes the empty sequence $[\;]$ of actions. It is used to indicate the \textit{initial situation}, which means that there no exists a situation before $S_0$. \\
\noindent {\bf Time points}: In the temporal Situation Calculus, the history of the world will be represented with the exact duration of time, or the range of times of the actions. Time will be expressed explicitly as a sequence of real numbers. \\
Finally, {\bf objects} represent domain specific individuals other than actions, situations, and time points.  The language's alphabet has variables and a finite number of constants for each sort, a finite number of action function symbols, a finite number of situation independent functions, a finite number of {\it functional fluents}, which are function symbols with a situation argument, a finite number of situation independent predicates, and a finite number of {\it predicate fluents} with a situation argument. \\
\noindent \textbf{Predicate fluents} or \textit{relational fluents} represent properties whose truth values vary from situation to situation as a consequence of executions of  actions. A predicate fluent is denoted by a predicate symbol whose last argument is a situation term. For example : 
\begin{itemize}
	\item $locked(d,s)$: A relational (predicate) fluent indicating  that the agent has locked the door in the situation $\textit{s}$. The value of the predicate should be \textit{True} or \textit{False}.
\end{itemize}
\noindent \textbf{Functional fluents} denote values that  vary from situation to situation as a consequence of executions of actions.  \\
The language  also  includes special predicates $Poss$, and $\sqsubset$. The predicate  \textit{$\sqsubset$} :\textit{ situation $\times $ situation}  establishes the \textit{relation of order} on the situations. 
$s \sqsubset s'$ states that the situation $s'$ is reachable from the situation $s$ by performing some sequence 
of actions. The predicate symbol \textit{Poss}:  \textit{action $\times $ situation}. \textit{Poss(a,s)} reflects the fact that it is possible to accomplish the action \textit{a} in situation \textit{s}.  \\

In this approach of axiomatizing legal contract domain into the Situation Calculus, we consider fluents to be used  to represent the obligations between parties. The truth value of a such fluent  will determine the performing of an obligation. Some actions called \textit{obligation producing actions} are used to produce obligation whoever agent perform them. In $S_{0}$ all obligations represented by fluents are  false.\\

A dynamic  domain is axiomatized in the Situation Calculus with axioms which describe how and under what conditions change occurs or not as a result of performing actions. Such axioms are called {\it basic action theory} in~\cite{DBLP:conf/birthday/Reiter91}, as expressed in the definition \ref{TheBasicActionsTheoriess}. They include the following classes of sentences: foundational axioms for situations; action precondition axioms stating the conditions of change; successor state axioms stating how change occurs; specific axioms for time, stating the action occurrence times and start times of situations;  unique names axioms for action terms; and axioms describing \(S_0\).
\begin{definition}{\bf The Basic Action Theories}
	\label{TheBasicActionsTheoriess} \end{definition}

A \textit{basic action theories} is 
a set of axioms $ \mathfrak{D}$ in the form expressed by  formula in ~\ref{BasicActionTheories}. The formula designates  a basic action theory: 
\begin{align}
	\begin{split}\label{BasicActionTheories}
		\mathfrak{D} = \Sigma \cup  \mathfrak{D}_{ss} \cup \mathfrak{D}_{ap} \cup \mathfrak{D}_{una} \cup \mathfrak{D}_{S_{o}}
	\end{split}
\end{align}
with : 
\begin{itemize}
	\item $\Sigma $ \textit{contains the set  the four foundational axioms} \cite{LevesquePR98}.
	\item $\mathfrak{D}_{ss} $ \textit{contains  a collection of successor state axioms. }
	\item $\mathfrak{D}_{ap} $ \textit{contains  a collection of actions precondition axioms}.
	\item $\mathfrak{D}_{una}$ \textit{contains unique names axioms for situations }.
	\item $\mathfrak{D}_{S_{o}}$ \textit{consists of a collection of first order sentences which are uniform in $S_{0}$ }.
\end{itemize}

$S_{0}$ represents the \textit{initial state of any domain}. Our domain is about obligations in legal contract, thus we will have \textit{the initial legal contract state}. All sentences of $\mathfrak{D}_{S_{o}}$ only indicate the $S_{0}$ term of sort situation \cite{LevesquePR98}. This because there is not any sentence containing in $D_{S0}$ that indicates any  \textit{do} (functional symbol), \textit{Poss} or $\sqsubset$.

\subsection{Frame Problem}
The frame problem in Artificial Intelligence examines how to represent what remains unchanged after an action without explicitly listing all unaffected facts \cite{hayes1981frame}\cite{brown2014frame}. In formalisms such as Situation Calculus, actions modify some fluents (properties of the world), but the majority of them persist \cite{DBLP:journals/etai/LevesquePR98}\cite{reiter2001knowledge}. Reasoning becomes ineffective when this is naively encoded since it demands a lot of "frame axioms." This is addressed by solutions like successor state axioms and non-monotonic reasoning, which assume persistence in other situations and only specify when changes take place.

In AI systems dealing with law, the frame problem is especially important since legal logic requires tracking how actions alter rights, duties, and rules—without accidentally changing unrelated legal facts. For instance, fulfilling a fine modifies the status of an obligation but shouldn't modify other contracts or entitlements. When using systems built on deontic logic, it's crucial to update only the legally relevant effects while keeping the broader legal framework unchanged. Solving this challenge efficiently is key to building trustworthy legal reasoning agents.
In this paper, we use successor state axioms in Situation Calculus to handle the frame problem to capture changes that happen, and later on,  to be able to build a complete framework to reason about obligations in \textit{multi-agent legal systems} \cite{drumond2008multi}\cite{sperotto2019ontology}.

\subsection{Running Example: Opening a Door}\label{OpeningDoor}  
We consider a variation of the Moore's safe opening example as expressed in~\cite{moore1981reasoning}, with a door replacing a safe. 
To start, let us enumerate some of the actions and fluents of the domain. \\
{\bf Primitive actions} are: 
$unlock(d,t)$,  
$lock(d,t)$, 
$moveTo(d)$, and  
$notify(m,t)$. \\ 
{\bf Fluents} are: $open(d,s)$, $locked(d,s)$, $notifiedManager(s)$, $at(d,s)$. \\
Finally, {\bf situation independent predicates and function} are:  
$manager(m)$, and  
$door(d)$. \\
All these actions, fluents, functions, and predicates are intuitively understandable, except the following: 
\\[.4ex]
{\bf Primitive Actions}.
\begin{itemize}    
	\item $pressButton(d,E,t)$: press the button to open the door $d$ with
	credential $E$ at time $t$; $E$ is a constant meaning "Employee".  
	\item $notify(m,t)$: notify the manager $m$ of the locking of the door at 
	time $t$.    	 	
\end{itemize}  
{\bf Fluents}.
\begin{itemize} 
	\item $locked(d,s)$: relational fluent meaning that 
	the agent has locked the door in $\textit{s}$.
	\item $notifiedManager(s)$: functional fluent meaning the manager has been 
	notified of the locking of the door.   
	\item $at(d,s)$: the agent is at the door $d$ in situation $s$.
	\item $open(d,s):$ relational fluent meaning that the agent has open the door $\textit{d}$ in $\textit{s}$.
\end{itemize} 
{\bf Action Precondition Axioms}. There is one for each action function $A({\vec x,t})$, with syntactic form
\begin{align}\label{APAs}
	\begin{split}
		Poss(&A(\vec x,t),s) \equiv  \Pi_A(\vec x,s). 
	\end{split}
\end{align}
Here, $\Pi_A({\vec x},s)$ is a first order formula with free variables among ${\vec x},s$. Moreover, the formula on the right hand side of (\ref{APAs}) is uniform in s\footnote{A formula $\phi(s)$ is uniform  in a situation term $s$ if $s$ is the only situation term that all the fluents occurring in $\phi(s)$ mention as their last argument.}. The following sentences states the condition under which the primitive actions listed above may be performed: 
\begin{align}  
	\begin{split}\label{APA-unlock}  
		Poss(&unlock(d,t),s) \equiv  door(d)\land at(d,s) \land locked(d,s), 
	\end{split}\\
	\begin{split}\label{APA-lockDoor}
		Poss(&lock(d,t),s)  \equiv 
		open(d,s) \land at(d,s) \land door(d), 
	\end{split}\\
	\begin{split}\label{APA-moveTo}
		Poss(&moveTo(d,t),s)\equiv  true, 
	\end{split} \\
	\begin{split}\label{APA-pressButton}  
		Poss(&pressButton(d,c,t),s) \equiv
		at(d,s)\land door(d)\land c = E, 
	\end{split} \\
	\begin{split}\label{APA-NOTIFY}
		Poss(&notify(m,t),s)\equiv  true.
	\end{split}
\end{align}
{\bf Successor State Axioms}. There is one for each $(n+1)$-ary relational fluent $F$, with the following syntactic form: 
\begin{align}  \label{SSAs}
	\begin{split}
		F({\vec x}, do(a,s)) \equiv  \Phi_F(\vec x,a,s).
	\end{split}
\end{align}
In addition, there is one such axiom for each $(n+1)$-ary functional fluent $f$, with the following syntactic form: 
\begin{align}  \label{SSAs-FF}
	\begin{split}
		f({\vec x}, do(a,s)) = y \equiv   \Phi_f(\vec x,y,a,s).
	\end{split}
\end{align}
The formulas on the right hand sides of (\ref{SSAs}) and (\ref{SSAs-FF}) are uniform in s, and $\Phi_F({\vec x},a,s)$ as well as $\Phi_f({\vec x},y,a,s)$ are formulas with free variables among ${\vec x},a,s$ and ${\vec x},y,a,s$, respectively. The formula $\Phi_F({\vec x},a,s)$ specifies how actions of the domain impact the truth value of a relational fluent $F$ and has the following canonical form ~\cite{reiter2001knowledge}: 
\begin{equation}\label{CANONICAL-BAT}
	\gamma \! {+ \atop F}(\vec x,a,s) \lor 
	F(\vec x,s) \land \neg \gamma \! {- \atop F}(\vec x,a,s),
\end{equation}
where $\gamma \! {+ \atop F}(\vec x,a,s)$ 
($\gamma \! {- \atop F}(\vec x,a,s)$) denotes a first order formula 
specifying the conditions that make a fluent \(F\) true (false) in the 
situation following the execution of \(a\). 
The formula $\Phi_f({\vec x},y,a,s)$ specifies how actions of the domain impact functional fluents; its canonical form is similar to the one for $\Phi_F({\vec x},a,s)$.  
Successor state axioms for the fluents of the door opening domain are as follows: 
\begin{align} 
	\begin{split}\label{SSA-openDoor}
		open(d, do(a,s))& \equiv (\exists t,c)a = unlock(d,t) \land \\  
		&(c = E \land pressButton(d,c,t)) \lor  \\
		&open(d,s) \land a  \not = lock(d,t),
	\end{split}\\  
	\begin{split}\label{SSA-lockedDoor}
		locked(d,do(a,&s)) \equiv 
		(\exists t)a  = lock(d,t) \lor (locked(d,s) \land \\
		&\neg (\exists t', c)(c = E \land a = pressButton(d,c,t'))),
	\end{split} \\
	\begin{split}\label{SSA-at}
		at(d,do(a,s))& \equiv (\exists t) a=moveTo(d,t) \lor \\
		&at(d,s) \land \neg (\exists d',t') moveTo(d',t'),
	\end{split}\\ 
	\begin{split}\label{SSA-notifiedManager}
		notifiedManager&(do(a,s))=m \equiv 
		(\exists t) (manager(m) \land \\
		&a = notify(m,t)) \lor notifiedManager(s).
	\end{split}
\end{align}

\section{Ideality and possible worlds approaches}
While ideality is a normative ordering over those worlds that assesses how well each one satisfies obligations or norms. As a semantic tool, possible worlds explain the various ways the world could be, indicating alternative situations or states of affairs. In possible worlds, every world corresponds to a full description of a situation, including what is true, what happened, what obligations hold, etc \cite{hansson2006ideal}\cite{prakken2015law}.

Some research have been done to introduce the obligation concept of the Standard Deontic Logic (SDL) into the Situation Calculus \cite{DBLP:conf/ecai/Demolombe04} and \cite{DBLP:conf/deon/DemolombeH04}. Authors ranked deontic alternatives in terms of their levels of ideality; they subsequently define the obligatory sentences as those that are true in all alternative situations with maximal ideality; and finally, they give a successor state axiom for the fluent O. By contrast, our work simplifies the formalization by removing any use of situation idealities and by solely embedding the possible world  semantics for SDL from \cite{hintikka1971some} in the Situation Calculus. We use the possible worlds approach the same way Governatori in \cite{DBLP:journals/auasjlog/GovernatoriR06} \cite{DBLP:conf/deon/GovernatoriOCR16} used it to incorporate the possible worlds, typically to give a sequence semantics for norms and obligations. We use the intuition of using possible worlds semantics behind the sequence semantics for norms and obligation, and extend it to the situation Calculus to reason about the obligations.  

\section{Obligations in the Situation Calculus}\label{ObligationInSitCalc}

\subsection{A Deontic Fluent for Expressing Obligations}\label{deonticFluent}

To embed the possible worlds typically used to give a semantics to the Standard Deontic Logic (SDL) from~\cite{vonWright1951-VONAEI-2} into the Situation Calculus, we need a binary {\it deontic accessibility} relation over situations, where some situation $s’$  is seen as being accessible from some other situation $s$ such that, as far as the agent located in situation $s$ is concerned, anything obligatory to that agent in $s$ must be true in situation $s’$. In this way, something being obliged in situation $s$ means that something is true in all situations $s’$ which are deontic accessible situations from $s$. 

Like in~\cite{DBLP:conf/ecai/Demolombe04}, we treat obligation as a fluent by introducing a binary relation $O(s’, s)$, to be read as “$s’$ is deontically accessible from $s$". 
We can now define \textbf{Oblg} as a necessity operator over the                         $O$-accessibility relation:  
\begin{align}\label{OBLG-DEFINITION-1}
	(\forall s'). \textit{\textit{O}} (s',s) \supset \phi\left[ s'\right],   
\end{align}
where $\phi[s']$ is the formula $\phi$ with  situation arguments added recursively to fluents that occur in $\phi$. The following expresses that it is obligatory to have the $door$ locked: 
\begin{equation}  
	\begin{aligned}                        
		(\forall s'). O(s',s) \supset locked(d, s').
	\end{aligned}
\end{equation}
The notation $\textbf{Oblg}(\phi, s)$  says that the formula $\phi$ is obligatory in situation $s$:   
\begin{align}\label{OBLG-DEFINITION-2}
	\textbf{Oblg}(\phi, s)  \defeq (\forall s'). \textit{\textit{O}}
	(s',s) \supset \phi\left[ s'\right].   
\end{align}
The formula $\phi$ used in the abbreviation~(\ref{OBLG-DEFINITION-2})  represents a  formula obtained from a Situation Calculus formula by recursively suppressing its situation arguments. Conversely, $\phi[s]$ represents a Situation Calculus formula obtained by recursively restoring its suppressed situation arguments. This abbreviation~(\ref{OBLG-DEFINITION-2}) as well as the formula (\ref{OBLG-DEFINITION-1}) are based on the semantic condition $(C.O^+)$  given in~\cite{Hintikka70} as Kripke semantics for the obligation modality.  
For example, $\textbf{Oblg}\big(locked(d), s \big)$  expands as follows: 
\begin{equation}  
	\begin{aligned}                     
		\hspace{-.1cm}\textbf{Oblg}(locked(d),s)  \hspace{-.1cm}  \defeq \hspace{-.1cm}
		(\forall s'). O(s',s) \supset locked(d,s').
	\end{aligned}
\end{equation}  

Finally, we introduce an abbreviation to capture the fact that it is obligatory  that $t$ denotes $x$  in all situations: 
\begin{align}\label{OREF-1}
	\textbf{Oref}(t,s) \defeq (\exists x) (\forall s'). O(s',s)\supset t[s'] = x,  
\end{align} 
which can be equivalently expressed as follows: \\
$\textbf{Oref}(t,s) \defeq (\exists x) \textbf{Oblg}(t=x, s)$.

\subsection{Obligation-Producing Actions}\label{ObligationProducingActions}

Among actions of the domain, some do affect what is happening in the world and obligation-producing actions do affect an agent's state of obligations\footnote{Some may do both, as we shall see later.}. We call the later , by reference to knowledge-producing actions introduced by Scherl and Levesque in~\cite{DBLP:conf/aaai/ScherlL93}. At the atomic level, obligation-producing actions are of two kinds: those actions whose effect is to make some (atomic) formula obligatory, and those whose effects is to make the denotation of some term obligatory. 
As an example of the first kind, the ground action  
$unlock(D,10)$  
executed by the agent in situation $S_0$ makes the ground atomic formula  $locked(D,do(unlock(D,10),S_0))$ obligatory.  
In other words, the following sentence is made true by the execution of the action $unlock(D,10)$:
\begin{align}\label{OBLIGATION-TO-LOCK}
	\begin{split}
		\textbf{Oblg}(locked(D),do(unlock(D,10),S_0)).  
	\end{split}
\end{align} 
In our example, $unlock(d,t)$ is an obligation-producing action that creates the obligation for the agent to subsequently get the door locked. The obligation produced is expressed by the ground sentence (\ref{OBLIGATION-TO-LOCK}); that is, by executing the action $unlock(D,10)$ in situation $S_0$, the agent has the obligation to make sure that in some situation $S$ following the situation $do(unlock(D,10),S_0)$, $locked(D,S)$ is true by virtue of an action executed by the agent to make $locked(D,S)$ true. 

In general, we assume that there is a provision of (finitely many) obligation-producing actions $a_{F_i}(\vec x_i,s)$ where $i= 1 \ldots m$, and that for each one of them, there is a  fluent $F_i(\vec x_i,s)$, $i= 1 \ldots m$, of the domain that is made obligatory in situation $do(a_{F_i}(\vec x_i),s)$ upon the execution of $a_{F_i}(\vec x_i,s)$ in situation $s$. It is important to notice the difference between knowledge-producing actions of Scherl and Levesque and obligation-producing actions that are introduced here. The execution of a  knowledge-producing action $a=A_i(\vec x_i)$ which is associated with a fluent $F_i(\vec x_i,s)$ leads to the knowledge of the truth value of $F_i(\vec x_i,s)$, whatever that truth value is, whereas the execution (in situation $S$) of an obligation-producing action  $a=A_i(\vec x_i)$ which is associated with a fluent $F_i(\vec x_i,s)$ leads to the obligation of the positive truth value of  
$F_i(\vec x_i,do(A_i(\vec x_i),S))$.

For the second kind of obligation-producing actions, we assume that 
there is a provision of (finitely many) such actions $a_{f_j}(\vec x_j)$ where $j= 1 \ldots n$, and that for each one of them, there is a functional fluent $f_j(\vec x_j)$, $j= 1 \ldots n$, of the domain whose denotation is made obligatory to the agent.    

As an example of the second kind of obligation-producing actions, we have $lock(d,t)$. By executing the ground action 
$lock(D,20)$ in the ground situation $S$, the agent makes the following ground atomic formula true: 
\begin{equation}\label{OBLIGATION-TO-NOTIFY} 
	\hspace{-.1cm}(\exists m) 
	\textbf{Oblg}(notifiedManager(do(lock(D,20),S)) 
	\hspace{-.1cm} =  \hspace{-.1cm}m).  
\end{equation} 
Thus, $lock(d,t)$ is an obligation-producing action that creates the obligation for the agent to get the manager notified. This obligation is expressed by the sentence (\ref{OBLIGATION-TO-NOTIFY}). 


\section{Solving the Frame Problem}

\subsection{Successor State Axiom for the Fluent $O$}\label{SuccessorStateAxiomForO} 

Solving the frame problem consists in giving successor state axioms for all fluents of the domain as done in~\cite{DBLP:conf/birthday/Reiter91}. So, we need a successor state axiom for the fluent $O$. 

Suppose the agent executes the action $moveTo(D,20)$ in $S_0$ where no obligation holds and $locked(D,S_0)$ is true. Then, the sentence $at(D,do(moveTo(D,20),S_0))$ holds, and no new obligation is introduced. Now, if $unlock(D,30)$ is executed at time t = 30, the sentence $open(D,do(unlock(D,30),do(moveTo(D,20),S_0,S_0)))$ holds and the agent has the obligation of subsequently locking the door. Finally, the execution of $lock(D,40)$ will stop the obligation for the agent to get the door locked.

The above consideration leads to three sorts of actions. The first sort is made of a provision of ordinary actions that do not produce any obligation with respect to the agent and do not release any existing obligations. The second sort is made of ordinary actions that do not produce any obligation, but they stop existing obligation. Finally, the third sort is made of obligation-producing actions.

To start, suppose that a deontic agent is located in a situation $s$. We can imagine several infinitely many situations $s'_1, s'_2, s'_3, \ldots$, which are deontic alternatives to $s$. Furthermore, suppose that the deontic agent performs some action $a$ in $s$ and therefore lands in the successor situation $do(a,s)$. We now wonder what are the deontic alternatives to $do(a,s)$, and how these alternatives are related to the situations $s'_1, s'_2, s'_3, \ldots$. We must come up with successor state axioms for the three sorts of actions identified above by spelling out how an action $a$ affects the fluent $O$. 

\subsubsection{Non Obligation-Producing Actions}

Consider the case of non obligation-producing actions that only  change the truth value of some fluent, without introducing new obligations nor stopping existing ones.  In this case, the deontic alternatives $s'_1, s'_2, s'_3, \ldots$ to $do(a,s)$ will be related to the deontic alternatives $s^*_1, s^*_2, s^*_3, \ldots$ to $s$ as if the action $a$ would have been performed in the later situations, such that $s^*_i = do(a,s'_i)$, for $i = 1, 2, 3, \ldots$. In summary we have\footnote{We assume that the situations involved in successor state axioms are executable in the following sense:
	\begin{align*}
		\begin{split}
			exec(s) =_{df} (\forall a,s'). do(&a,s') \sqsubseteq s 
			\supset Poss(a,s') 
			\land start(s') \leq time(a).
		\end{split}
	\end{align*}
	Here, we assume axioms that specify the occurrence time $time(a)$ of actions and the start time $start(s)$ of situations.}: 
\begin{align}\label{SSAforO-NOPA}
	O(s', do(a,s))\equiv  (\exists s^*). O(s^*,s) \land   s' = do(a,s^*).
\end{align}
This means that these ordinary actions do not introduce new obligations to those that are already in place in the situation $s$ and in all the situations that are $O$-accessible to the situation $s$: the situation $s'$ which is $O$-accessible to $do(a,s)$ will be a mere reflection of the situation $s^*$ which is $O$-accessible to $s$. 
Figure ~\ref{fig:SSAFor-NOPA} illustrates the sentence (\ref{SSAforO-NOPA}).

In the door opening example, the action $moveTo(d,t)$ is of this sort. For this action, we have the following characterization for $O$:

\begin{align}
	\begin{split}
		O(s', do(&moveTo(d,t),s))\equiv \\
		&(\exists s^*).  O(s^*,s) \land s' = do(moveTo(d,t),s^*). 
\end{split}\end{align}  
In essence, this characterization of the fluent $O$ does not differ from the one for non-knowledge-producing actions of Scherl and Levesque.  



\begin{figure}[t]   
	\centering
	\includegraphics[width=0.45\textwidth]{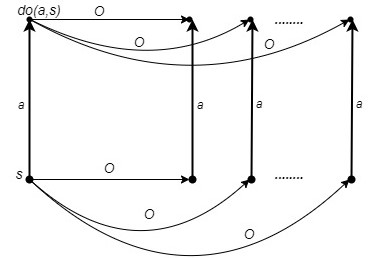}
\caption{Situations that are $O$-related to $do(a, s)$ when 
	$a$ is not an obligation-producing action}
\label{fig:SSAFor-NOPA}
\end{figure} 

\begin{figure}[t]   
	\centering
	\includegraphics[width=0.45\textwidth]{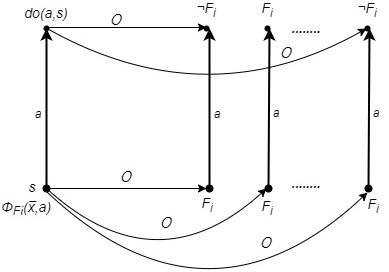}
	\caption{Situations that are $O$-related to $do(a, s)$ when 
		$a$ is an Obligation-Releasing action}
	\label{fig:OBLIGATION-STOPPAGE-NEG}
\end{figure}


\subsubsection{Obligation-Releasing Actions}\label{ObligationReleasingActions}

We now turn to the case of non obligation-producing actions (which we call Obligation-Releasing Actions) that  change the truth value of those fluents of the domain which are associated with obligation-producing actions, and, by doing so, they stop some existing obligations. In our door opening scenario, $notify(m,t)$ is such an action. Consider the reasoning we started with at the beginning of the present subsection. Suppose the agent is in the following situation with specified time  \(S= do(lock(D,40), do(unlock(D,30),do(moveTo(D,20),S_0)))\), where $(\exists m) \textbf{Oblg}(notifiedManager=m, s)$ becomes true. At this point the agent must ensure to notify some manager $M$ in some subsequent situation and at some time, e.g., 50, that the door is closed. This happens when the agent executes the action $notify(M,50)$. So, we reach the situation $do(notify(M),S)$, in that situation the obligation formula :  $(\exists m) \textbf{Oblg}(notifiedManager=m, s)$ ceases to be true. 

For any action $a$, the termination of the obligation of $F_i(\vec {x_i})$ for the agent in situation $do(a,s)$ happens when $F_i(\vec {x_i}, do(a,s))$ holds, which in turn happens when the formula $\Phi_{F_i}(\vec {x_i},a,s)$ 
becomes true. This justifies that we must produce a successor state axiom for $O$ which must entail $m$ sentences of the following form, one for each of the $m$ fluents $F_i(\vec x_i)$ that are associated with the obligation-producing actions $a_{F_i}(\vec {x_i})$, $1 \leq i\leq m$:
\begin{equation}\label{OBLIGATION-STOPPAGE-NEG} 
	\begin{split}
		(\forall s')[O(&s',s)\supset F_i(\vec{x_i},s')] \land \\
		&\Phi_{F_i}(\vec {x_i},a,s) 
		\supset \neg (\forall s^*)[O(s^*,do(a,s))\supset F_i(\vec{x_i},s^*)]. 
	\end{split} 
\end{equation} 
Figure~\ref{fig:OBLIGATION-STOPPAGE-NEG} illustrates the sentence (\ref{OBLIGATION-STOPPAGE-NEG}). The figure shows that the situations that are $O$-alternatives to $do(a,s)$, where $a$ is an obligation-releasing action associated with $F_i$ and $\Phi_{F_i}(\vec {x_i},a,s)$ is true, are the images of   $O$-alternatives to $s$ under the performance of $a$ in which we select those images where $F_i$ is false.

An argument similar to the case of obligation termination can be made for obligation persistence. 
Sentences~(\ref{OBLIGATION-STOPPAGE-POS}) give the sufficient conditions for the  obligations's persistence:
\begin{equation}\label{OBLIGATION-STOPPAGE-POS} 
	\begin{split}
		(\forall s')[O(&s',s)\supset F_i(\vec{x_i},s')] \land \\ 
		&\neg \Phi_{F_i}(\vec {x_i},a,s) 
		\supset (\forall s^*)[O(s^*,do(a,s))\supset F_i(\vec{x_i},s^*)]. 
	\end{split} 
\end{equation} 
Again, one such sentence must be entailed by the successor state axiom for
$O$ for each of the $m$ fluents $F_i(\vec x_i)$ that are associated 
with the obligation-producing actions $a_{F_i}(\vec {x_i})$, $1 \leq i\leq m$.   

\subsubsection{Obligation-Producing Actions}


We specify how an obligation-producing action $a_{F_i}(\vec{x_i})$ that is associated with a fluent ${F_i}(\vec{x_i})$ affects the fluent $O$. Recall that the action $a_{F_i}(\vec{x_i})$, which is executed in a situation $s$, is understood to make the positive truth value of $F_i(\vec{x_i})$ obligatory in the successor situation $do(a_{F_i}(\vec{x_i}), s)$. Thus, in this case, we want that $\textbf{Oblg}(F_i(\vec{x_i}), do(a_{F_i}(\vec{x_i}), s))$ holds.  
This leads to the following characterization of the fluent $O$: 
\begin{align}\label{SSAforOPA-1}
	\begin{split}   
		O(s', &do(a_{F_i}(\vec{x_i}),s))\equiv  \\
		&(\exists  s^{*},a). O(s^{*},s)\land 
		s' = do(a(\vec{x_i}),s^{*}) \land  F_i(\vec {x_i},do(a,s^{*})).  
	\end{split}
\end{align}  
Using the successor state axiom (\ref{SSAs}) and the canonical form  (\ref{CANONICAL-BAT}) of its right-hand side, we get the following:
\begin{align}
	\begin{split}
		F_i(\vec {x_i},do(a,s^{*}))\equiv \Phi_{F_i}(\vec {x_i},a,s),
	\end{split}
\end{align} 
which brings the sentence (\ref{SSAforOPA-1}) to become:
\begin{align}\label{SSAforOPA-2}
	\begin{split}   
		O(s', &do(a_{F_i}(\vec{x_i}),s))\equiv \\
		&(\exists s^{*},a'). O(s^{*},s)\land 
		s' = do(a'(\vec{x_i}),s^{*}) \land \Phi_{F_i}(\vec {x_i},a',s). 
	\end{split}
\end{align}   
Sentence (\ref{SSAforOPA-2}) tells us that the situations that are considered $O$-related  to $do(a_{F_i}(\vec{x_i}),s)$ are those obtained by performing some (non obligation-producing) action $a'$ in situations $O$-related to $s$ in which  
$\Phi_{F_i}(\vec {x_i},a',s)$ evaluates to $true$. 
Referring to sentence (\ref{SSAforOPA-2}), we obtain the following sentence for each obligation-producing action $a_{F_i}(\vec{x_i})$, $1\leq i \leq m$: 
\begin{align}\label{SSAforOPA-4}
	\begin{split} 
		a= & a_{F_i}(\vec{x_i}) \supset \\
		[&O(s', do(a(\vec{x_i}),s))\equiv \\
		&\hspace{0.2cm}(\exists s^{*},a'). O(s^{*},s)\land 
		s' = do(a'(\vec{x_i}),s^{*}) \land  
		\Phi_{F_i}(\vec {x_i},a',s^*)].      
	\end{split}
\end{align}
Notice that an obligation-producing action $a_{F_i}$ executed in $s$, is  supposed to make the $F_i$ obligatory in $do(a_{F_i},s)$. Henceforth, for any situation $s^*$ such that $\neg F_i(s^*)$  holds, $s^*$ must not be considered as a doxatic alternative to $do(a_{F_i},s)$. Those situations where $F_i(s^*)$  holds will be retained as the situations that are $O$-accessible to $do(a_{F_i},s)$: they are exactly the images, under the performance of action $a'$, of the $O$-accessible situation to $s$, where action $a'$ is the one free action variable mentioned in  $\Phi_{F_i}(\vec {x_i},a',s^*)$.   Figure~\ref{fig:SSAforOPA-4} illustrates the sentence (\ref{SSAforOPA-4}).
\begin{figure}[t] 
	\centering
	\includegraphics[width=0.50\textwidth]{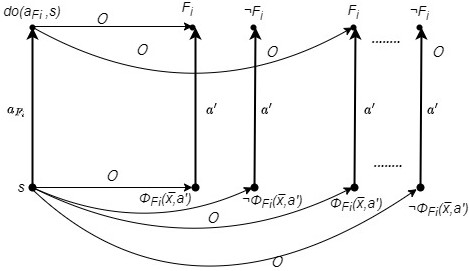} 
	\caption{Situations that are $O$-related to $do(a, s)$ when 
		$a$ is an obligation-producing action producing an obligation 
		$F_i$}
	\label{fig:SSAforOPA-4}
\end{figure}


Given the provision of (finitely many) obligation-producing actions $a_{F_i}(\vec x)$ where $i= 1 \ldots m$, assume that they are associated with relational fluents $F_i(\vec x)$, $i= 1 \ldots m$. Moreover, given a further provision of (finitely many) obligation-producing actions $a_{f_j}(\vec x)$ where $j= 1 \ldots n$, assume that they are associated functional fluents $f_j(\vec x)$, $j= 1 \ldots n$. We are now in position to draw the final conclusion about the characterization of the fluent $O$:  the sentences~(\ref{SSAforO-NOPA}), (\ref{OBLIGATION-STOPPAGE-NEG}), (\ref{OBLIGATION-STOPPAGE-POS}), and (\ref{SSAforOPA-4}) can be put together in the equivalent sentence (\ref{SSAforOPA-5-2}) (that takes into account functional fluents) shown in Figure~\ref{fig:SSAforOPA-5-2}. Sentence (\ref{SSAforOPA-5-2}) is our proposed  successor state axiom  for $O$.   
\begin{figure}[!t]
	\centering 
	\caption{Successor state axiom for the fluent  $O$}
	\label{fig:SSAforOPA-5-2}
	\hrule 
	\begin{align}\label{SSAforOPA-5-2}
		\begin{split} 
			O(s', & do(a,s))\equiv 
			[(\exists s^*,a'). O(s^*,s)\land s' = do(a',s^*) \land \\
			[[&[a\neq a_{F_1}(\vec{y_1})\land \ldots \land 
			a\neq a_{F_m}(\vec{y_m}) \land  \\
			&a\neq a_{f_1}(\vec{z_1})\land \ldots \land 
			a\neq a_{f_n}(\vec{z_n}) \land a = a'] \land\\
			&(\forall \vec{x_1})[(F_1(\vec{x_1},s^*) \land \Phi_{F_1}(\vec {x_1},a,s)\supset \neg  F_1(\vec{x_1},s')] \land \ldots \land\\
			&(\forall \vec{x_m})[(F_m(\vec{x_m},s^*) \land \Phi_{F_m}(\vec {x_m},a,s)\supset \ \neg  F_m(\vec{x_m},s')]  \land \\
			&(\forall \vec{x_1})[(F_1(\vec{x_1},s^*) \land \neg \Phi_{F_1}(\vec{x_1},a,s)\supset F_1(\vec{x_1},s')] \land \ldots \land\\      
			&(\forall \vec{x_m})[(F_m(\vec{x_m},s^*) \land \neg \Phi_{F_m}(\vec{x_m},a,s)\supset F_m(\vec{x_m},s')]\land\\
			&(\forall \vec{x_1}, y_1)[(f_1(\vec{x_1},s^*)=y_1 \land \\
			&\hspace{2cm}\Phi_{f_1}(\vec {x_1},y_1,a,s)\supset f_1(\vec{x_1},s') \not= y_1]\land \ldots \land\\
			&(\forall \vec{x_n}, \vec{y_n})[(f_n(\vec{x_n},s^*)=\vec{y_n} \land \\ 
			&\hspace{2cm}\Phi_{f_n}(\vec{x_n},\vec{y_n},a,s)\supset 		                                                      
			f_n(\vec{x_n},s')\not= \vec{y_n}]  \land \\
			&(\forall \vec{x_n}, \vec{y_1})[(f_1(\vec{x_1},s^*)= \vec{y_1} \land \\
			&\hspace{2cm}\neg \Phi_{f_1}(\vec{x_1},\vec{y_1},a,s)  \supset	           
			f_1(\vec{x_1},s') = \vec{y_1}]\land \ldots \land\\      
			&(\forall x_n, y_n)[(f_n(\vec{x_n},s^*)=\vec{y_n} \land \\
			&\hspace{2cm}\neg \Phi_{f_n}(\vec{x_n},\vec{y_n},a,s) \supset        f_n(\vec{x_n},s')= \vec{y_n}]]\lor \\
			& [(\forall x_1)[a=a_{F_1}(\vec{x_1}) 
			\supset  \Phi_{F_1}(\vec {x_1},a',s^{*})] 
			\land \ldots \land\\ 
			&\hspace{.1cm}(\forall x_m)[a=a_{F_m}(\vec{x_m}) \supset 
			\Phi_{F_m}(\vec {x_m},a',s^{*})]\land\\
			& (\forall \vec{y_1}, \vec{y_1})[a=a_{f_1}(\vec{x_1})\supset \Phi_{f_1}(\vec {x_1},\vec{y_1},a',s^{*})]\land\ldots\land\\ 
			&(\forall x_n, \vec{y_n})[ a=a_{f_n}(\vec{x_n}) \supset 
			\Phi_{f_n}(\vec {x_n},\vec{y_n},a',s^{*})]]]]. 
		\end{split}
	\end{align}	
	\hrule
\end{figure}  
The sentence ~(\ref{SSAforOPA-5-2-Example}) in Figure~\ref{fig:SSAforOPA-5-2-Example} is the successor state axiom for $O$ in our running example.   
\begin{figure}[!t]
	\centering 
	\caption{Example of Successor state axiom for the fluent $O$}
	\label{fig:SSAforOPA-5-2-Example}
	\hrule 	\begin{align}\label{SSAforOPA-5-2-Example}
		\begin{split} 
			O(s', &do(a,s))\equiv [(\exists s^*,a'). O(s^*,s)\land s' = do(a',s^*) \land \\
			[[&[a\neq unlock(d,t)\land a\neq lock(d,t)\land a = a'] \land\\
			&(\forall d)[(locked(d, s^*)\land ((\exists t)(a = lock(d,t) \lor \\
			& \hspace{0.3cm} locked(d,s) \land \neg (\exists c,t') a = pressButton(d,c,t'))]\supset \\
					& \hspace{0.3cm}\neg  locked(d,s')] \land\\ 
			&(\forall d)[(locked(d, s^*)\land \neg((\exists t)(a = lock(d,t) \lor locked(d,s) \land \\
			&\hspace{0.3cm}\neg (\exists c,t') a = pressButton(d,c,t'))]\supset locked(d,s')] \land\\  			 
			&(\forall m)[(notifiedManager(s^*)=m \land 
			((\exists t') (manager(m,s) \land \\
			& \hspace{0.3cm}a = notify(m,t')) \lor notifiedManager(s) \land \\ 
			&\hspace{0.3cm}\neg(\exists m',t'') a=notify(m',t''))\supset  \\
			&\hspace{3cm} notifiedManager(s') \not= m]\land  \\                     
			&(\forall m)[(notifiedManager(s^*)=m \land 
			\neg  ((\exists t') (manager(m,s) \land \\
			&\hspace{0.3cm}a = notify(m,t'))  \lor notifiedManager(s) \land \\
			&\hspace{0.3cm}\neg(\exists m',t'') a=notify(m',t''))\supset \\  
			&\hspace{3cm}notifiedManager(s') = m]]\lor \\
			& [(\forall d,t)[a=unlock(d,t)\supset ((\exists t')a' = lock(d,t') \lor \\
			&\hspace{0.3cm}locked(d,s^*) \land a' \not = pressButton(d,t'))]\land \\  
			& [(\forall d,t)[a=lock(d,t)\supset ((\exists m,t') (manager(m,s^*) \land   \\
			&\hspace{0.3cm}a' = notify(m,t')) \lor notifiedManager(s^*) \land  \\ 
			&\hspace{0.3cm}\neg (\exists m',t'')a'=notify(m',t''))]]]].
		\end{split}
\\	\end{align}	
	\hrule 
\end{figure}  

\subsection{General Successor State Axiom for $O$}\label{General-SSA-for-O}

The successor state axiom for $O$ axiom (\ref{SSAforOPA-5-2}) assumes that obligations are atomic fluents. We need to generalize this axiom by considering obligations that are general formulas. Assume that we have a provision of obligation-producing actions $a_{\psi_i}(\vec x_i)$, $i= 1 \ldots m$, each associated with an arbitrary formula $\psi_i(\vec x_i)$, $i= 1 \ldots m$.  These formulas $\psi_i(\vec x)$ are the so-called "objective situation-suppressed sentences" ("Objective sentences" for short) of Reiter  ~\cite{reiter2001knowledge}: they are about the world and do not mention the abbreviation $\textbf{Oblg}$.
Furthermore assume, without loss of generality, that $\psi_i(\vec x_i)$ is the conjunction $G_1(\vec x_1)\land \ldots \land G_k(\vec x_k)$   of atomic formulas $G_i(\vec x_i)$,  $i= 1 \ldots k$. Also, assume that 
$\mathcal{\Phi}_{\psi_i}(\vec {y_i},a,s)$ is the conjunction 
$\Phi_{G_1}(\vec {x_1},a_1,s_1)\land \dots \land \Phi_{G_l}(\vec {x_l},a_1,s_1)$    
of atomic formulas $\Phi_{G_i}$, $i= 1 \ldots l$. Here the formula 
$\Phi_{G_i}(\vec {x_i},a_i,s_i)$ represents the right-hand side of the successor state axiom for the fluent $G_i(\vec {x_i})$.  
We can generalize the axiom (\ref{SSAforOPA-5-2}) by using the general formulas  $\psi_i$ instead of atomic fluents: we omit this here for lack of space. 	

\subsection{Correctness}\label{Properties}

The formalization is correct in the sense that, an obligation-producing action creates an obligation in the successor situation and if a fluent is obligatory in a given situation, it remains so, unless an action that causes the obligation to stop is executed.  
\begin{theorem}{\bf (Correctness)}
	\label{OBLIGATION-PERSISTENCE-AND-STOPPAGE} 
	For all actions $\alpha$, fluents $ F_{i}$, and situations $s$ the following holds:\\[.1cm]
	(1) ({\bf Obligation Creation}): If $\alpha = a_{F_{i}}({\vec x}_{i})$, where $a_{F_{i}}({\vec x}_{i})$ 
	is the obligation-producing action associated with 
	$F_{i}$, then the following
	$ \textbf{Oblg}(F_i({\vec x}_{i}), do(\alpha, s))$ holds.\\[.1cm]
	(2) ({\bf Obligation Stoppage}): If $\textbf{Oblg} (F_{i}({\vec x}_{i}), s)$ and 
	$\Phi_{F_i}(\vec {x_i},\alpha,s)$ hold, then 
	$ \neg \textbf{Oblg}(F_i({\vec x}_{i}), do(\alpha, s))$ holds. \\[.1cm]
	(3) ({\bf Obligation Persistence}): If $\textbf{Oblg} (F_{i}({\vec x}_{i}), s)$ and $\neg  \Phi_{F_i}(\vec {x_i},\alpha,s)$ hold, then 
	$ \textbf{Oblg}(F_i({\vec x}_{i}), do(\alpha, s))$ holds. 
\end{theorem}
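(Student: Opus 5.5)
The plan is to unfold the abbreviation $\textbf{Oblg}$ from (\ref{OBLG-DEFINITION-2}) in each of the three claims, so that each becomes a statement about the fluent $O$ at $do(\alpha,s)$. Then I instantiate the successor state axiom (\ref{SSAforOPA-5-2}) for $O$ with $a=\alpha$ and use the unique names axioms in $\mathfrak{D}_{una}$ to decide which disjunct of its right-hand side is active. The proof is entirely syntactic, carried out inside the basic action theory extended with (\ref{SSAforOPA-5-2}). I assume, as in the footnote to (\ref{SSAforO-NOPA}), that all situations involved are executable.

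For (1), let $\alpha=a_{F_i}(\vec x_i)$ and take an arbitrary $s'$ with $O(s',do(\alpha,s))$. By (\ref{SSAforOPA-5-2}) there are $s^*,a'$ with $O(s^*,s)$ and $s'=do(a',s^*)$. The first disjunct requires $\alpha\neq a_{F_i}(\vec y_i)$ for all $\vec y_i$, which fails, so the second disjunct must hold. Its $i$-th conjunct, instantiated at $\vec x_i$, gives $\Phi_{F_i}(\vec x_i,a',s^*)$. The successor state axiom (\ref{SSAs}) for $F_i$ turns this into $F_i(\vec x_i,do(a',s^*))$, that is, $F_i(\vec x_i,s')$. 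Since $s'$ was arbitrary, $\textbf{Oblg}(F_i(\vec x_i),do(\alpha,s))$ follows.

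For (2) and (3), assume $\textbf{Oblg}(F_i(\vec x_i),s)$, so $F_i(\vec x_i,s^*)$ holds for every $O$-alternative $s^*$ of $s$. I restrict to the case where $\alpha$ is not one of the obligation-producing actions; this is the regime that (\ref{OBLIGATION-STOPPAGE-NEG}) and (\ref{OBLIGATION-STOPPAGE-POS}) were designed for, and it forces the first disjunct of (\ref{SSAforOPA-5-2}) with $a'=\alpha$.
\begin{itemize}
\item \emph{Persistence (3).} Take any $s'$ with $O(s',do(\alpha,s))$. The conjunct $(F_i(\vec x_i,s^*)\land\neg\Phi_{F_i}(\vec x_i,\alpha,s))\supset F_i(\vec x_i,s')$ fires and gives $F_i(\vec x_i,s')$ directly.
\item \emph{Stoppage (2).} The negative conjunct gives $\neg F_i(\vec x_i,s')$ for every $O$-alternative $s'$ of $do(\alpha,s)$. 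To conclude $\neg\textbf{Oblg}(F_i(\vec x_i),do(\alpha,s))$, I must exhibit at least one such $s'$. Otherwise the universal in $\textbf{Oblg}$ is vacuously true and the claim fails.
\end{itemize}

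The main obstacle is therefore an existence (seriality) requirement: $s$ must have an $O$-alternative $s^*$, and $do(\alpha,s^*)$ must satisfy the remaining frame conjuncts of the first disjunct for every other fluent and argument tuple. My first attempt is to assume seriality of $O$ in $\mathfrak{D}_{S_0}$, i.e.\ $(\exists s')O(s',S_0)$, which is natural for SDL's D axiom. I would then prove by induction on situations, using (\ref{SSAforOPA-5-2}), that seriality is preserved. The frame conjuncts may genuinely conflict, because they demand $\neg F_j(s')$ whenever $\Phi_{F_j}$ holds with respect to $s$. If that happens, I would state (2) under an explicit hypothesis that $do(\alpha,s)$ has some $O$-alternative, and note that this is exactly the content of (\ref{OBLIGATION-STOPPAGE-NEG}).
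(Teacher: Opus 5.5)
\paragraph{Same route as the paper, but two steps fail.}
Your route is the paper's own. You unfold $\textbf{Oblg}$, instantiate (\ref{SSAforOPA-5-2}) at $a=\alpha$, let unique names select a disjunct, and fire the matching conjunct. The paper's proof of (2) does the same, appealing to ``the fact that $A$ is not an obligation-producing action'' and using the same negative conjunct. Your argument for (1) is fine; the paper omits that proof.

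Restricting (2)--(3) to non-obligation-producing $\alpha$ is forced, not optional. For $\alpha=a_{F_i}(\vec x_i)$ with $\textbf{Oblg}(F_i(\vec x_i),s)$ and $\Phi_{F_i}(\vec x_i,\alpha,s)$, clauses (1) and (2) contradict each other. In the example, $\Phi_{locked}(D,unlock(D,t),s)\equiv locked(D,s)$, so a second $unlock$ produces exactly this situation.

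\paragraph{First gap: the disjunct is not forced.}
The failing step is your claim that a non-obligation-producing $\alpha$ ``forces the first disjunct with $a'=\alpha$''. As written, the second disjunct of (\ref{SSAforOPA-5-2}) is a conjunction of implications $a=a_{F_j}(\vec x_j)\supset\Phi_{F_j}(\vec x_j,a',s^*)$, and likewise for the $f_k$. All of these are vacuously true when $\alpha$ is not obligation-producing. The axiom then reduces to $O(s',do(\alpha,s))\equiv(\exists s^*,a').\,O(s^*,s)\land s'=do(a',s^*)$ with $a'$ unconstrained.

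Under that reading (3) fails. Suppose its hypotheses hold at $S$ and $S$ has an $O$-alternative $s^*$. Take any $a'$ falsifying $locked(D)$ at $s^*$, such as $pressButton(D,E,t')$. Then $do(a',s^*)$ is an $O$-alternative of $do(\alpha,S)$ in which $locked(D)$ fails. Your persistence argument is correct only for a modified axiom whose second disjunct is guarded by ``$a$ is some $a_{F_j}(\vec y)$ or $a_{f_k}(\vec z)$'', as in (\ref{SSAforOPA-4}). You need to say explicitly that you are adopting it.

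\paragraph{Second gap: seriality cannot rescue (2).}
Under the guarded reading, the existence problem you raise for (2) cannot be repaired by seriality. The only candidates are $s'=do(\alpha,s^*)$ with $O(s^*,s)$. The conjunct for $F_i$ at $\vec x_i$ itself, not merely some other fluent, eliminates all of them.

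\begin{itemize}
\item That conjunct demands $\neg F_i(\vec x_i,do(\alpha,s^*))$, i.e.\ $\neg\Phi_{F_i}(\vec x_i,\alpha,s^*)$ by the successor state axiom for $F_i$.
\item Since $F_i(\vec x_i,s^*)$ holds, in canonical form this is $\neg\gamma^+_{F_i}(\vec x_i,\alpha,s^*)\land\gamma^-_{F_i}(\vec x_i,\alpha,s^*)$.
\item The hypothesis $\Phi_{F_i}(\vec x_i,\alpha,s)$ is $\gamma^+_{F_i}(\vec x_i,\alpha,s)\lor(F_i(\vec x_i,s)\land\neg\gamma^-_{F_i}(\vec x_i,\alpha,s))$.
\end{itemize}

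Whenever $\gamma^{\pm}_{F_i}$ mention no fluents, these are jointly unsatisfiable. This is the case for $locked$ in the running example; take $\alpha=notify(M,t)$ and $locked(D,S)$. Then $do(\alpha,s)$ has no $O$-alternatives, $\textbf{Oblg}(F_i(\vec x_i),do(\alpha,s))$ holds vacuously, and (2) fails.

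So your induction preserving seriality breaks precisely at the steps (2) is about. In such cases your fallback hypothesis ``$do(\alpha,s)$ has an $O$-alternative'' is inconsistent with (2)'s other hypotheses, so the repaired statement says nothing.

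\paragraph{Overall.}
The paper's proof of (2) simply posits such an $S'$ and has the same hole; you located it correctly. But neither reading of (\ref{SSAforOPA-5-2}) delivers both (2) and (3). Neither your plan nor the paper's argument establishes the theorem as stated: at most you get (1), and, under the guarded axiom, (3).
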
 
\begin{proof} 
	{\bf Proof}:\\[.2cm]
	(1): For lack of space, we omit the proof. \\[.1cm]
	(2): Suppose that, for some ground object tuple ${\vec X}_{i}$, some ground action $A$, and some ground situation $S$, $\Phi_{F_i}(\vec {x_i},A,S)$ holds. Moreover, assume that we execute some ground action $A(\vec X)$ in $S$. Then, by the successor state axiom for $F$, $F(\vec X_i)$ holds in the subsequent situation $do(A({\vec X}),S)$.  Furthermore, by assumption we have that $\textbf{Oblg} (F_{i}({\vec X}_{i}), S)$ holds; hence, by the abbreviation (\ref{OBLG-DEFINITION-2}), we get the following:
	\begin{align} \label{OBLG-DEFINITION-1-2}
		(\exists s^*)\textit{\textit{O}}(s^*,S)) \supset F_i({\vec X},s^*).   
	\end{align}   
	We must now show that $F_i({\vec X},S)$ ceases to be obligatory in $do(A(\vec X_i), S$; that is, we must show that    
	$ \neg \textbf{Oblg}(F_i({\vec X}_{i}), do(A({\vec X}_{i}), S))$ holds. 
	To do so, suppose that we have some ground situation $S'$ such that $O(S',do(A({\vec X}),S))$. Then, by Axiom~(\ref{SSAforOPA-5-2}) and by the fact that $A$ is not an obligation-producing action, the following sentence holds (after appropriate skolemization and simplifications):  
	\begin{align} \label{SSAforOPA-5-2-2}
		\begin{split}  
			&O(S^*,S)\land S' = do(A,S^*) \land \\
			&(\forall x_i)[(F_i(\vec{x_i},S^*) 
			\land \Phi_{F_i}(\vec {x_i},A,S)\supset \\
			&\hspace{3.9cm}\neg  F_i(\vec{x_i},S')]. 
		\end{split}
	\end{align}	
	By the sentence (\ref{OBLG-DEFINITION-1-2}) and the assumption 
	$\Phi_{F_i}(\vec {X_i},A,S)$, the above sentence  (\ref{SSAforOPA-5-2-2}) yields the following:  
	\begin{align} \label{SSAforOPA-5-2-3}
		\begin{split}  
			O(S^*,S)\land S' = do(A,S^*) \land 			 
			\neg  F_i(\vec{X_i},S')]. 
		\end{split}
	\end{align}
	Since the sentence (\ref{SSAforOPA-5-2-3}) holds for any fixed situation $S$ and $S^*$, we conclude that 
	$ \neg \textbf{Oblg}(F_i({\vec x}_{i}), do(\alpha, s))$ holds. \\[.1cm]
	(3): The arguments of the proof are similar to the proof of obligation stoppage. We omit the proof.  
\end{proof}


\section{Characterization of $O$ in the Initial Situation}\label{INIT-O}


Characterizing the fluent $O$ in the initial situation is important, since, as it stands, the definition of obligation given in (\ref{OBLG-DEFINITION-2}) does not differentiate the obligation modality from knowledge. Such a differentiation comes only in the form of appropriate properties which are unique to the deontic accessibility relationship.  
The space of situations is constrained by foundational axioms, which impose a structuring of that space in the form of a tree rooted in the initial situation \(S_0\) ~\cite{DBLP:conf/birthday/Reiter91}. To accommodate the introduction of knowledge, this $S_0$-rooted tree semantics of the space of situations has been modified in~\cite{DBLP:conf/aaai/ScherlL93} with the introduction of the predicate $Init(s)$ to capture initial situations for knowledge; and the specification given for this predicate still holds for obligations. The next abbreviation~(\ref{DEFINITION-INIT}) and two axioms adapt Reiter's version of the predicate $Init(s)$  ~\cite{reiter2001knowledge} to obligations:

\begin{align}
	&Init(s) \defeq \neg (\exists a,s') s=do(a,s'), \label{DEFINITION-INIT} \\
	&O(s,s') \supset (Init(s) \equiv Init(s')), \label{O-RELATING-INIT-SIT} \\
	\begin{split}\label{WEAKER-INDUCTION-AXIOM}
		&(\forall P). (\forall s)(Init(s) \supset P(s)) \land 
		(\forall a,s)(P(s) \supset \\
		&\hspace{4.2cm} P(do(a,s))) \supset (\forall s)P(s).
	\end{split} 
\end{align} 
For deontic logic, the following restrictions are imposed to the predicate $Init(s)$ to obtain different variants of SDL:  \\[.7ex]
{\bf Secondary Reflexivity}: 
\begin{equation}\label{SECONDARY-REFLEXIVITY}
	\begin{aligned}
		(\forall s, s'). Init(s) \land  Init(s')\supset 
		(O(s',s) \supset O(s',s')).
	\end{aligned}
\end{equation}  
{\bf Seriality}: 
\begin{equation}\label{SERIALITY}
	(\forall s) . Init(s) \supset 
	((\exists s'). Init(s') \land O(s',s)).
\end{equation} 
{\bf Secondary Seriality}: 
\begin{equation}\label{SECONDARY-SERIALITY}
	\begin{aligned}
		(\forall s).Init(s) \hspace{-.1cm}\supset \hspace{-.1cm}
		((\exists s'). Init(s') \land O(s',s)     
		\land O(s',s')).
	\end{aligned}
\end{equation}     
The properties above are used as restrictions to model various deontic logic systems. For example, von Wright's SDL is obtained by using Secondary Reflexivity and Seriality. 
\begin{theorem}({\bf Restriction Theorem}) \label{RESTRICTION-THEOREM}
	Suppose that the properties of secondary reflexivity, secondary seriality, and seriality for the $O$ relation hold in all situations $s$ such that $Init(s$). Then these properties also hold for the $O$ relation in all executable situations. 
\end{theorem}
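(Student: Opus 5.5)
Following Scherl and Levesque's restriction theorem for knowledge, I would argue by the induction principle (\ref{WEAKER-INDUCTION-AXIOM}), taking $P(s)$ to be the conjunction of the three properties read at $s$. The property names treat the situations related to an executable $s$, not only initial ones. Concretely, I would prove
\[
(\forall s').\, O(s',s) \supset O(s',s')
\]
for secondary reflexivity, $(\exists s')O(s',s)$ for seriality, and $(\exists s')\,O(s',s)\land O(s',s')$ for secondary seriality, all restricted to executable $s$. The base case $Init(s)$ is exactly the hypothesis. Here (\ref{O-RELATING-INIT-SIT}) guarantees that any $s'$ with $O(s',s)$ is itself initial, so the hypothesis on initial situations applies to $s'$ directly.

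For the inductive step I assume $P(s)$ and consider $do(a,s)$, assuming this situation is executable; executability of $do(a,s)$ gives executability of $s$, so the hypothesis is available. I then split according to the three disjuncts of the successor state axiom (\ref{SSAforOPA-5-2}). These are the non obligation-producing case (\ref{SSAforO-NOPA}), the releasing constraints (\ref{OBLIGATION-STOPPAGE-NEG}) and (\ref{OBLIGATION-STOPPAGE-POS}), and the obligation-producing case (\ref{SSAforOPA-4}).

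\textbf{Secondary reflexivity.} Let $O(s',do(a,s))$. By (\ref{SSAforOPA-5-2}) we get $s'=do(a',s^*)$ with $O(s^*,s)$, and by the hypothesis $O(s^*,s^*)$. I must show $O(do(a',s^*),do(a',s^*))$. I would instantiate (\ref{SSAforOPA-5-2}) at $do(a',s^*)$, choosing the witness $s^*$ and the same action $a'$.
\begin{itemize}
\item In the ordinary case $a'=a$, the fluent-side conditions compare $F_i(\vec x,s^*)$ with $F_i(\vec x,do(a,s^*))$. These agree with the successor state axioms for $F_i$, provided the test $\Phi_{F_i}(\vec x,a,\cdot)$ is evaluated at $s^*$ rather than $s$.
\item In the producing case, $\Phi_{F_i}(\vec x,a',s^*)$ already holds at $s^*$, so the same condition is inherited.
\end{itemize}

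\textbf{Seriality and secondary seriality.} From a witness $s^*$ with $O(s^*,s)$, and with $O(s^*,s^*)$ for the secondary version, I would propose $do(a',s^*)$ as the witness for $do(a,s)$. For ordinary actions I take $a'=a$. For obligation-producing actions I need some $a'$ with $\Phi_{F_i}(\vec x,a',s^*)$, for which the natural choice is the releasing action such as $lock$ or $notify$. Secondary seriality then follows by combining this witness with the reflexivity step above.

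\textbf{Main obstacle.} The hard part is the case analysis of (\ref{SSAforOPA-5-2}) itself: some conjuncts evaluate $\Phi_{F_i}$ at $s$ while others evaluate it at $s^*$. Self-accessibility $O(s^*,s^*)$ forces these to coincide only when the argument is $s^*$. Seriality in the producing case also needs an extra assumption that a witnessing action exists. I would first try to rewrite the axiom so that the tests mention $s^*$ uniformly; this is harmless for the intended reading and makes the inductive step go through. If that rewriting fails, I would restrict the seriality claims to actions whose associated $\Phi$ is satisfiable at every $O$-accessible situation, and state that restriction explicitly.
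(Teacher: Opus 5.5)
Your skeleton is the paper's: induction via (\ref{WEAKER-INDUCTION-AXIOM}), a case split on the disjuncts of (\ref{SSAforOPA-5-2}), secondary reflexivity reduced to obtaining $O(do(a',s^*),do(a',s^*))$ from $O(s^*,s^*)$, and secondary seriality derived from the other two. You are right that this self-accessibility step and the ``further conditions'' of the axiom carry all the content. The paper's proof passes over them: from $O(s',s')$ it infers directly that $O(s'',s'')$ cannot fail for $s''=do(a,s')$.

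\textbf{The ordinary case fails.} When you instantiate (\ref{SSAforOPA-5-2}) at $do(a,s^*)$ with witness $s^*$, the axiom's current situation \emph{is} $s^*$. So every $\Phi_{F_i}$ is already evaluated at $s^*$, and no rewriting is involved. Substitute $F_i(\vec x,do(a,s^*))\equiv\Phi_{F_i}(\vec x,a,s^*)$. The stoppage conjunct becomes $F_i(\vec x,s^*)\supset\neg\Phi_{F_i}(\vec x,a,s^*)$, and the persistence conjunct becomes $F_i(\vec x,s^*)\supset\Phi_{F_i}(\vec x,a,s^*)$. Jointly they force $\neg F_i(\vec x,s^*)$ for all $\vec x$. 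For a functional fluent $f_j$, taking $y=f_j(\vec x,s^*)$ makes the pair contradictory outright. So these conjuncts contradict the successor state axioms rather than agree with them.

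Concretely, let $S_0$ be the only initial situation, with $O(S_0,S_0)$ (all three restrictions hold) and $locked(D,S_0)$. After the executable, non obligation-producing action $moveTo(D,1)$, the only candidate alternative is $do(moveTo(D,1),S_0)$, and it violates the stoppage conjunct for $locked$. Hence $do(moveTo(D,1),S_0)$ has no $O$-alternative, and seriality fails. Under the reading you adopt (first disjunct for non obligation-producing actions), the statement itself is therefore false. Making the tests mention $s^*$ uniformly cannot help, since here $s=s^*$ already. In general that rewriting would empty $O(\cdot,do(a,s))$ as soon as some $\textbf{Oblg}(F_i(\vec X),s)$ holds, destroying seriality and Theorem~\ref{OBLIGATION-PERSISTENCE-AND-STOPPAGE}(2).

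\textbf{The producing case has a separate gap.} Which disjunct governs $O(do(a',s^*),do(a',s^*))$ is decided by $a'$, not by $a$. So $\Phi_{F_i}(\vec x,a',s^*)$, a fact about the fluent associated with $a$, is the relevant condition only when $a'=a_{F_i}(\vec x)$. In the running example with $a=unlock(D,T)$, the admissible $a'$ include $lock(D,t')$. That action is itself obligation-producing for $notifiedManager$, so it is subject to the $notifiedManager$ test at $s^*$. When $locked(D,s^*)$ holds, the admissible $a'$ also include non-producing actions such as $moveTo$, which lead back to the failing ordinary case.

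Read literally, the second disjunct of (\ref{SSAforOPA-5-2}) consists only of implications with antecedents $a=a_{F_i}(\vec x_i)$ or $a=a_{f_j}(\vec x_j)$. It is therefore vacuously true for non obligation-producing $a$. Under that reading the ordinary cases become immediate (witness $s^*$, arbitrary $a'$). But secondary reflexivity then needs $O(do(a',s^*),do(a',s^*))$ for obligation-producing $a'$ as well, which requires an extra hypothesis such as $\Phi_{F_j}(\vec y,a_{F_j}(\vec y),s^*)$.

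Since your proposal ends by altering the axiom or weakening the claim, it does not prove the theorem as stated. To close it, you must fix one reading of (\ref{SSAforOPA-5-2}) and state explicitly the additional assumptions under which the induction goes through: witnessing actions in the producing case, and self-accessibility after producing steps.
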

\begin{proof}~~\\
	We use the induction  principle for the situation calculus  ~\cite{DBLP:journals/jacm/PirriR99} as given in  its weaker form in formula~(\ref{WEAKER-INDUCTION-AXIOM}). So we must show that the properties ~(\ref{SECONDARY-REFLEXIVITY}),  ~(\ref{SERIALITY}), and  ~(\ref{SECONDARY-SERIALITY}) hold in the initial situations, and that, if they hold in situation $s$, then they also hold in executable situations $do(a,s)$. Assume that they hold in initial situations. Thus, we only need to show that whenever they hold in $s$, then they also hold in $do(a,s)$. ~~\\[.2cm]
	{\bf Secondary Reflexivity}: 
	Suppose that 
	\((\forall s,s'). O(s',s) \supset O(s',s')\) 
	holds for $s$. We must prove that 
	\((\forall s,s'').O(s'',do(a,s)) \supset O(s'',s'')\) 
	holds for every $a$ such as $do(a,s)$ is executable. For the sentence \((\forall    s,s'').O(s'',do(a,s)) \supset O(s'',s'')\) to be false there should be some $s$ and some $a$, such that \(O(s'',do(a,s))\) is true, but \(O(s'',s'')\) is false. According to the successor state  axiom (\ref{SSAforOPA-5-2}) for \textit{O}, the formula \(O(s'',do(a,s))\) is true iff \(s'' = do(a,s')\) for some $s'$, and \(O(s',s)\) as well as further conditions on the right-hand side of the axiom (\ref{SSAforOPA-5-2}) are true. By assumption, if \(O(s',s)\) is true, then \(O(s',s')\) must be true as well. Therefore \((\forall s,s''). O(s'',do(a,s)) \supset O(s'',s'')\) cannot be false.\\[.1cm]  
	{\bf Seriality}: Assume that 
	\((\forall s)(\exists s') O(s',s)\) 
	holds. We must show that 
	\((\forall s)(\exists s'')O(s'',do(a,s))\)
	holds for every $a$ such as $do(a,s)$ is executable. For  
	$(\forall s)(\exists s'')O(s'',do(a,s))$ to be false, we will have the case where, for some $s$, no \(s''\) is such that \(O(s'',do(a,s))\) is true. Notice that by the successor state axiom (\ref{SSAforOPA-5-2}) for \textit{O}, the unique alternative for \(O(s'',do(a,s))\) to be true is for \(s''\) to be equal to \(do(a,s')\) for some \(s'\) such that \(O(s',s)\), as well as further conditions on the right-hand side of the axiom (\ref{SSAforOPA-5-2}) to hold. Then, according to the successor state axiom (\ref{SSAforOPA-5-2}), and the assumption that \(O(s',s)\) is true, it has to be the case that \(O(s'',do(a,s)\) is true. Therefore \((\exists s') O(s',s) \supset (\exists s'')O(s'',do(a,s)) \), which contradicts our assumption. \\[.1cm]
	{\bf Secondary Seriality}: 
	Suppose that 
	\((\forall s)(\exists s')(O(s',s) \wedge O(s',s'))\) is true for all $s$. We must prove that 
	\((\forall s)(\exists s'')(O(s'',do(a,s)) \wedge O(s'',s''))\)  
	holds for every \textit{a} such as \(do(a, s)\) is executable. The rest of the proof goes by contradiction. We leave details out. 
\end{proof}
\section{Reasoning About Obligations}\label{Reasoning}
\subsection{Regression}\label{RegressionDefinition}
Regression is a type of proof theory that uses \textit{backward reasoning}. Regression is a mechanism for transposing reasoning about the truth value of formulas in successor situations to reasoning about modified formulas in the initial situation.  

In the Situation Calculus, regression is the most crucial theorem-proving tool. In the type of regression mechanism proposed by Reiter in \cite{reiter1991frame}, the reasoning about future situations is simply  brought to the reasoning about the initial situation \(S_0\) \cite{pirri1999some}.  The outcome of the regression is a formula in ordinary modal logic, meaning that it has no action terms and only has the situation's term \(S_0\).

In a regressable sentences, each situation term is rooted at \(S_{0}\), so one can count the number of actions involved by examining the term. Reiter’s regression operator for reasoning about actions back to the initial situation is a reasoning mechanism for this setting as well.
\subsection{Regressing Obligation Formulas}
We start by defining a formula $W$ of the Situation Calculus to be {\it regressable} iff it mentions only situation terms that are rooted in $S_0$, does not quantify over situations, and does not mention the predicate symbol $\sqsubset$, nor equality atoms over situation terms. 
Suppose we have a regressable formula $W$ where all situation terms have been suppressed. Then the one-step regression of $W$ through the action $\alpha$ , denoted  $\rho^1(W,\alpha)$, is the following manipulation~\cite{reiter2001knowledge}: first take the formula $W$ and restore the situation argument $do(\alpha,\sigma)$ to all fluents mentioned in $W$, for some $\sigma$; then, regress the resulting formula; finally suppress all situations from the resulting regressed formula.   
The following Definition~\ref{OBLIGATION-REGRESSION-OPERATOR} accommodates formulas that mention $\textbf{Oblg}$: 
\begin{definition}{\bf (Regression Operator for Obligations)} 
	\label{OBLIGATION-REGRESSION-OPERATOR}     
	Suppose $W$ is a situation-suppressed (regressable) formula of the Situation Calculus, and 	$\mathcal{D}$ is a basic action theory. Moreover, assume, without loss of generality, that $W$ is the conjunction $G_1(\vec x_1)\land \ldots \land G_k(\vec x_k)$ of atomic formulas 
	$G_i(\vec x_i)$,  $i= 1 \ldots k$, and that $\mathcal{\Phi}_{W}$ is the conjunction $\Phi_{G_1}\land \dots \land \Phi_{G_l}$    
	of atomic formulas $\Phi_{G_i}$, $i= 1 \ldots l$. Here the formula 
	$\Phi_{G_i}$ represents the right-hand side of the successor state axiom for the fluent $G_i(\vec {x_i})$. 
	Then, in addition to the regression steps described in ~\cite{reiter2001knowledge} for regressable formulas, the following steps accommodate  expressions of the form 
	$\textbf{Oblg}(W,do(\alpha,\sigma))$, 
	where $\alpha$ is any of actions mentioned in Section~\ref{SuccessorStateAxiomForO}:  
	\begin{description}
		\item[(i)] An obligation in $S_0$ stops the regression process:
		\[{\mathcal{R}}[\textbf{Oblg}(W,S_0)] = \textbf{Oblg}(W,S_0).\]
		\item[(ii)] If $\alpha$ is a non obligation-producing action that is 
		a non obligation-releasing one, then 
		\[{\mathcal{R}}[\textbf{Oblg}(W,do(\alpha,\sigma))] = 
		{\mathcal{R}}[\neg \mathcal{\Phi}_{W} \land
		\textbf{Oblg}(\rho^1(W,\alpha),\sigma)]. \]   
		\item[(iii)] If $\alpha$ is a non obligation-producing action which is 
		an obligation-releasing one, then 
		\begin{align*}
			\begin{split}  
				{\mathcal{R}}[\neg \textbf{Oblg}(W,do(\alpha, \sigma))] = 
				{\mathcal{R}}[\mathcal{\Phi}_{W} \land
				\textbf{Oblg}(\rho^1(W,\alpha),\sigma)]. 
			\end{split}
		\end{align*} 
		\item[(iv)] If $\alpha$ is not an obligation-producing action which is     
		neither an obligation-releasing one, nor a non obligation-releasing 
		one, then 
		\begin{align*}
			\begin{split}  
				{\mathcal{R}}[\textbf{Oblg}(W,do(\alpha, \sigma))] = 
				{\mathcal{R}}[\textbf{Oblg}(\rho^1(W,\alpha),\sigma)]. 
			\end{split}
		\end{align*}                         
		\item[(v)] If $\alpha = A_{\psi}(\vec t)$ is an obligation-producing 
		action that brings about the obligation of the formula $\psi$ 
		for some term $\vec t$, then
		\begin{align*}
			\begin{split}  
				{\mathcal{R}}[\textbf{Oblg}(W,&do(\alpha, \sigma))] = 
				{\mathcal{R}}[(\exists \alpha')\textbf{Oblg}(\psi(\vec t) \supset 
				\rho^1(W,\alpha'), \sigma)].                                     
			\end{split}
		\end{align*}                
		\item[(vi)] If $\alpha = A(\vec t)$ is an obligation-producing action 
		that requires the referent of a term $t$ to be some individual $y$,
		then
		\begin{align*}
			\begin{split}  
				{\mathcal{R}}[\textbf{Oblg}(&W,do(\alpha, \sigma))] =  
				{\mathcal{R}}[(\exists y, \alpha')\textbf{Oblg}(t = y \supset
				\rho^1(W,\alpha'),\sigma)].                                     
			\end{split}
		\end{align*}          
	\end{description}
\end{definition}
In the sequel, the notation ${\mathcal{R}}^* [W]$ is used to denote a repeated application of the regression operator on the formula $W$ until further applications leave $W$ unchanged.
We need to extend the basic action theories from ~\cite{DBLP:journals/logcom/LinR94} with  new classes of axioms and revisit and/or extend some basic sentences of basic action theories for knowledge from ~\cite{DBLP:journals/tocl/Reiter01}. The extensions essentially take into account the inclusion of $\textbf{Oblg}$ and $\textbf{Oref}$ in the action precondition axioms and axioms about $S_0$, and the inclusion of the successor state axiom for $O$. With basic action theories for obligations in hand, we now have:
\begin{theorem}{\bf (The Regression Theorem with Obligations)} 
	\label{REGRESSION-THEOREM-WITH-OBLIGATIONS}     
	Suppose that $\mathcal{D}$ is a basic action theory for obligations, and $W$ is a regressable sentence. Let ${\mathcal{D}}_{S_0}$ and  
	${\mathcal{D}}_{una}$ be the axioms describing the initial situation and unique name axioms, respectively. Moreover, suppose that ${\mathcal{O}}_{Init}$ consists of the properties of secondary refexivity, and secondary seriality. 
	Then 
	\[{\mathcal{D}} \models W \hspace{.2cm}\mbox{iff} \hspace{.2cm} {\mathcal{D}}_{S_0}
	\cup {\mathcal{D}}_{una} \cup {\mathcal{O}}_{Init} \models {\mathcal{R}}^*[W].\]
\end{theorem}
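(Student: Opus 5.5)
The proof will mirror the proof of Reiter's regression theorem for ordinary basic action theories, and its extension to knowledge by Scherl and Levesque. It proceeds by induction on the number of $do$ symbols occurring in $W$, which for a regressable sentence is finite. I will also need a nested induction on the structure of $W$ to handle the connectives and the $\textbf{Oblg}$ abbreviation.

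The core lemma I would establish first is a one-step soundness statement. For every regressable $W$ and every regression step $W \leadsto W'$ licensed either by Reiter's original rules or by clauses (i)--(vi) of Definition~\ref{OBLIGATION-REGRESSION-OPERATOR}, we have $\mathcal{D} \models W \equiv W'$.

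For ordinary fluent atoms, $Poss$ atoms and functional fluents, this is Reiter's argument, using $\mathfrak{D}_{ss}$, $\mathfrak{D}_{ap}$ and $\mathfrak{D}_{una}$. The new cases are the $\textbf{Oblg}$ expressions. For each clause I would expand $\textbf{Oblg}(W,do(\alpha,\sigma))$ via (\ref{OBLG-DEFINITION-2}) into $(\forall s').O(s',do(\alpha,\sigma)) \supset W[s']$. I then substitute the right-hand side of the successor state axiom (\ref{SSAforOPA-5-2}) for $O$ and use $\mathfrak{D}_{una}$ to select the disjunct that applies to $\alpha$. The quantifier over $s'$ is eliminated in favour of $s^*$ using $s' = do(a',s^*)$. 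The resulting body $W[do(a',s^*)]$ is rewritten as $\rho^1(W,a')[s^*]$ by the successor state axioms of the fluents in $W$, and the expression is repackaged as an $\textbf{Oblg}$ at $\sigma$:
\begin{itemize}
\item For non-obligation-producing actions, where $a'=\alpha$, this yields clause (iv).
\item Adding the stoppage and persistence conjuncts yields clauses (ii) and (iii). These should follow from Theorem~\ref{OBLIGATION-PERSISTENCE-AND-STOPPAGE}, parts (2) and (3).
\item For obligation-producing actions, the condition $\Phi_{F_i}(\vec x_i,a',s^*)$ becomes the antecedent $\psi(\vec t)$ (respectively $t=y$), which yields clauses (v) and (vi).
\end{itemize}
Since each step strictly decreases the total $do$-depth, and $\mathcal{R}^*$ terminates in a sentence whose only situation term is $S_0$ (clause (i)), chaining these equivalences gives $\mathcal{D} \models W \equiv \mathcal{R}^*[W]$.

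The second half is the relative consistency/reduction step. I need to show that, for a sentence $V$ uniform in $S_0$ (possibly containing $\textbf{Oblg}(\cdot,S_0)$, i.e.\ quantifying only over $O$-alternatives of $S_0$), $\mathcal{D} \models V$ iff $\mathcal{D}_{S_0}\cup\mathcal{D}_{una}\cup\mathcal{O}_{Init}\models V$. The right-to-left direction is immediate, since $\mathcal{O}_{Init}$ is entailed by $\mathcal{D}$ through its initial-situation axioms. For left to right, I would take a model $M_0$ of the smaller theory and extend it to a model of $\mathcal{D}$ without changing the interpretation on initial situations. The extension builds the tree of situations over all $Init$ elements (respecting (\ref{O-RELATING-INIT-SIT}) and the induction axiom (\ref{WEAKER-INDUCTION-AXIOM})). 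It then defines fluents by recursion through $\mathfrak{D}_{ss}$, and defines $O$ on non-initial situations by recursion through (\ref{SSAforOPA-5-2}). Theorem~\ref{RESTRICTION-THEOREM} guarantees that the deontic restrictions in $\mathcal{O}_{Init}$ propagate to all executable situations, so the extended structure is a genuine model of $\mathcal{D}$.

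The main obstacle is the one-step case for obligation-producing actions. There, the axiom (\ref{SSAforOPA-5-2}) introduces an existentially quantified fresh action $a'$ ranging over arbitrary actions, rather than the fixed $\alpha$. This $a'$ must be pulled out of the scope of the universal quantifier over $s^*$ as $(\exists\alpha')$, as clauses (v) and (vi) require. Pulling an existential out of a universal is not validity-preserving in general. I would first try to justify it by making the dependence explicit: treat $\alpha'$ as a function of $s^*$ and check whether $\psi(\vec t)\supset\rho^1(W,\alpha')$ admits a uniform witness. If that fails, I would fall back to stating the clause with the quantifier inside the $\textbf{Oblg}$, and prove the theorem for that corrected form.
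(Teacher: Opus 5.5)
Your overall plan (one-step equivalences for the regression clauses, chained down to $S_0$, then extending a model of $\mathcal{D}_{S_0}\cup\mathcal{D}_{una}\cup\mathcal{O}_{Init}$ to one of $\mathcal{D}$) matches the paper's outline. That outline simply asserts that each of (i)--(vi) is an equivalence and defers the rest to Scherl and Levesque. The gap is that the one-step lemma is false for several clauses as written, so you cannot chain it.

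You were right to distrust (v), but the diagnosis is backwards. Expand $\textbf{Oblg}(W,do(\alpha,\sigma))$ as $(\forall s').\,O(s',do(\alpha,\sigma))\supset W[s']$ and substitute (\ref{SSAforOPA-5-2}). The $(\exists s^*,a')$ then sits in the \emph{antecedent}, so it becomes $(\forall s^*,a')$; nothing existential ever crosses a universal. For $\alpha=a_{F_i}(\vec t)$ the surviving condition is $\Phi_{F_i}(\vec t,a',s^*)$, i.e.\ $\rho^1(\psi(\vec t),a')[s^*]$, and what you actually get is
\[
\textbf{Oblg}(W,do(\alpha,\sigma))\equiv(\forall\alpha')\,\textbf{Oblg}\bigl(\rho^1(\psi(\vec t),\alpha')\supset\rho^1(W,\alpha'),\sigma\bigr),
\]
which is universal and has a regressed antecedent. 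Clause (v) (existential, with antecedent $\psi(\vec t)$) is not equivalent to this. Neither is your fallback with $\exists\alpha'$ moved inside $\textbf{Oblg}$, and no witness argument helps.

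A concrete counterexample in the door domain is $V=\textbf{Oblg}(at(D),do(unlock(D,10),S_0))$. The witness $\alpha'=moveTo(D,10)$ makes $\rho^1(at(D),\alpha')$ equivalent to $true$, so $\mathcal{R}^*[V]$ is valid. Yet $do(lock(D,t'),s^*)$ is an $O$-alternative of $do(unlock(D,10),S_0)$ for every $O$-alternative $s^*$ of $S_0$, and it preserves $at(D)$. So $V$ fails as soon as some initial $S_1$ has $O(S_1,S_0)$, $O(S_1,S_1)$ and $\neg at(D,S_1)$, which is compatible with $\mathcal{D}_{S_0}$ and $\mathcal{O}_{Init}$. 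Hence the ``if'' direction of the theorem is false for $\mathcal{R}$ as defined. Your route can at best establish it for an operator in which (v), and by the same quantifier analysis (vi), is replaced by the universal form above.

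The clauses for non-obligation-producing actions fail too.
\begin{itemize}
\item \textbf{Disjunct selection.} $\mathcal{D}_{una}$ does not ``select the disjunct'' here. The second disjunct of (\ref{SSAforOPA-5-2}) is a conjunction of implications $\alpha=a_{F_i}(\vec x_i)\supset\cdots$, which una makes vacuously \emph{true}, leaving $a'$ unconstrained. You first have to guard that disjunct with ``$\alpha$ is obligation-producing''.
\item \textbf{Clauses (ii) and (iii) do not follow from the correctness theorem.} Theorem~\ref{OBLIGATION-PERSISTENCE-AND-STOPPAGE}(2),(3) are one-way implications about a single atom $F_i$ under the extra hypothesis $\textbf{Oblg}(F_i,s)$. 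You need biconditionals for arbitrary $W$, with $\textbf{Oblg}(\rho^1(W,\alpha),\sigma)$ on the right.
\item \textbf{The direct computation matches none of (ii)--(iv).} It gives $(\forall s^*).\,O(s^*,\sigma)\land\mathit{Filt}(s^*,\alpha,\sigma)\supset\rho^1(W,\alpha)[s^*]$, where $\mathit{Filt}$ collects the stoppage and persistence conjuncts. Suppose $locked(D)$ is true and obligatory at $\sigma$ and $\alpha=moveTo(D,t)$. The stoppage conjunct deletes every alternative, so $\textbf{Oblg}(W,do(\alpha,\sigma))$ holds vacuously for every $W$. Meanwhile the right side of (ii) is false for $W=locked(D)$, and that of (iv) can be false too. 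This configuration also refutes Theorem~\ref{OBLIGATION-PERSISTENCE-AND-STOPPAGE}(2), whose proof never shows that an alternative survives.
\item \textbf{Clause (iii) is self-defeating.} Every $O$-alternative of $do(\alpha,\sigma)$ is some $do(\alpha,s^*)$ with $O(s^*,\sigma)$. So its right side $\Phi_W\land\textbf{Oblg}(\rho^1(W,\alpha),\sigma)$ entails $\textbf{Oblg}(W,do(\alpha,\sigma))$, which is the negation of its left side.
\end{itemize}

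Finally, the model-extension half needs $Init(S_0)$ in the small theory. Otherwise a model in which $S_0$ is not $Init$ may give it no $O$-alternative, since secondary seriality is conditioned on $Init$. $\mathcal{D}$ excludes this, so such a model cannot be extended. The Restriction Theorem plays no role in that step, because $\mathcal{O}_{Init}$ only constrains initial situations.
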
 
\begin{proof}  {\bf Outline}.
	The proof extends the one for the basic regression theorem given in~\cite{DBLP:conf/aaai/ScherlL93} by handling the regression steps related to the obligation. Each one of the steps {\bf (i)}-{\bf (vi)} of the extended regression operator is justified by a proposition whereby the sentence mentioned on the left-hand and right-hand sides of the equality sign are shown to be logically equivalent.That is, each step of the regression preserves logical equivalence. the whole process terminates in $S_0$ after a finite number of steps.\\[.2cm]  
	{\bf Case(i)}: The proof of the regression rule expressed by the the following \({\mathcal{R}}[\textbf{Oblg}(W,S_0)] = \textbf{Oblg}(W,S_0)\)  is immediate, since for a situation suppressed expression $W$ with free variables  $\vec x$, we have:  
	\[{\mathcal{D}}\models (\forall \vec x).\textbf{Oblg}(W(\vec x),S_0) \equiv \textbf{Oblg}(W(\vec x),S_0).\]  
	{\bf Case(ii)}: Here the regression rule is justified by the following: suppose $W$ is a situation suppressed expression  with free variables  $\vec x$, and $A(\vec y)$ is a non obligation-producing action that is a non obligation-releasing one; then  

	\[
	\begin{split}
		\mathcal{D} \models (\forall \vec{x}, \vec{y}, \vec{s})\;
		&\textbf{Oblg}\!\left(W(\vec{x}), \operatorname{do}(A(\vec{y}), s)\right)
		\;\equiv\; \\
		&\neg \Phi_{G_1} \land \dots \land \neg \Phi_{G_l}
		\land\;
		\textbf{Oblg}\!\left(\rho^{1}(W(\vec{x}), A(\vec{y})), s\right).
	\end{split}
	\]
	{\bf Case(iii)}: This regression rule is justified by the following: suppose $W$ is a situation suppressed expression  with free variables  $\vec x$, and $A(\vec y)$ is a non obligation-producing action that is an obligation-releasing one; then  
	
	\[
	\begin{split}
		\mathcal{D} \models (\forall \vec{x}, \vec{y}, \vec{s})\;
		\textbf{Oblg}(W(\vec{x})&, \operatorname{do}(A(\vec{y}), s))
		\;\equiv\; \\
		&\neg \mathcal{\Phi}_{W}(\vec{x}) \land\;
		\textbf{Oblg}(\rho^{1}(W(\vec{x}), A(\vec{y})), s).
	\end{split}
	\]
	
	{\bf Case(iv)}: Suppose $W$ is a situation suppressed expression  with free variables  $\vec x$, and $A(\vec y)$ is a non obligation-producing action that is neither an obligation-releasing one, nor a non obligation-releasing one ; then  
%
	
%
	
	\[
	\begin{split}
		\mathcal{D} \models (\forall \vec{x}, \vec{y}, s)\;
		\textbf{Oblg}(W(\vec{x})&, \operatorname{do}(A(\vec{y}), s))
		\;\equiv\; \\
		&(\exists A')\,\textbf{Oblg}(
		\psi(\vec{y}) \supset \rho^{1}(W(\vec{x}), A'),\, s
		).
	\end{split}
	\]
	
	{\bf Case(vi)}: Suppose $W$ is a situation suppressed expression  with free variables  $\vec x$, and $A_{t}(y)$ is an  obligation-producing action that requires the referent of some term $t$ to be an individual $y$ ; then  
	
	\[
	\begin{split}
		\mathcal{D} \models (\forall \vec{x}, y, s)\;
		&\textbf{Oblg}(W(\vec{x}), \operatorname{do}(A_{t}(y), s))
		\;\equiv\; \\
		&(\exists t, A')\;
		\bigl(
		\textbf{Oblg}(t = y) \supset
		\textbf{Oblg}(\rho^{1}(W(\vec{x}), A'), s)
		\bigr).
	\end{split}
	\]

\end{proof}
Theorem ~\ref{REGRESSION-THEOREM-WITH-OBLIGATIONS} states that in order to evaluate a sentence $W$ against a basic action theory $\mathcal{D}$, it is necessary and sufficient to evaluate ${\mathcal{R}}[W]$ in the initial theory ${\mathcal{D}}_{S_0}$ augmented with unique name axioms and properties of the relation $O$. 
\subsection{Example}
Suppose that the axiomatization of the initial situation includes $door(D,S_0)$, $locked(D,S_0)$, and $at(D,S_0)$. Assume we regress the sentence~\ref{LOCKING-OBLIGATION}.  
\begin{align}
	\begin{split}\label{LOCKING-OBLIGATION} 
		\textbf{Oblg}(locked(D), do(pressButton&(D,20),\\
		&do(unlock(D,10),S_0))))).  
	\end{split}
\end{align}
By the Step~{\bf iv} of Definition~\ref{OBLIGATION-REGRESSION-OPERATOR} on the sentence (\ref{LOCKING-OBLIGATION}) we obtain:
\begin{align}\label{LOCKING-OBLIGATION-2} 
	\begin{split}  
		{\mathcal{R}}[\textbf{Oblg}(\rho^1(locked(D),pressButton&(D,20)),\\
		&do(unlock(D,10),S_0))], 
	\end{split}
\end{align}    
which, by the one-step reduction, expands to
\begin{align}\label{LOCKING-OBLIGATION-3} 
	\begin{split}  
		{\mathcal{R}}[\textbf{Oblg}(locked(D), do(unlock(D,10),S_0))]. 
	\end{split}
\end{align} 
Furthermore, recall that $locked(d,s)$ is the formula that is associated  with the obligation-producing action $unlock_{locked}(d,s)$; thus performing Step~{\bf v} of Definition~\ref{OBLIGATION-REGRESSION-OPERATOR} on the sentence (\ref{LOCKING-OBLIGATION-3}) yields 
the sentence~(\ref{REGRESSED-LOCKING-OBLIGATION}). 
\begin{align}
	\begin{split}\label{REGRESSED-LOCKING-OBLIGATION} 
		{\mathcal{R}}[(\exists a').\textbf{Oblg}(&locked(D) \supset  
		((\exists t).a' = lock(D,t) \lor \\
		&\hspace{.3cm} locked(D) \land a' \not = pressButton(D,t)),S_0)]. 
	\end{split}
\end{align} 
\section{Related Work}\label{RelatedWork}
Our obligation-producing actions in the Situation Calculus are similar to and a  substantial modification of those that were first introduced  in ~\cite{DBLP:conf/ecai/Demolombe04} and in~\cite{DBLP:conf/deon/DemolombeH04},  where the deontic accessibility relationship $O$ was introduced. 
In ~\cite{DBLP:conf/ecai/Demolombe04} and~\cite{DBLP:conf/deon/DemolombeH04}, authors ranked deontic alternatives in terms of their levels of ideality; they subsequently define the obligatory sentences as those that are true in all alternative situations with maximal ideality; and, finally, they give a successor state axiom for the fluent $O$. By contrast, our work simplifies the formalization by removing any use of situation idealities and by solely embedding the possible world semantics for SDL from \cite{Hintikka70} in the Situation Calculus. Furthermore, we expand Reiter's regression to reason about obligation-producing actions.
Another approach for incorporating deontic notions into the Situation Calculus is presented by Classen and Delgrande in~\cite{DBLP:conf/kr/ClassenD20}. 
In~\cite{DBLP:conf/kr/ClassenD20}, deontic assertions and modalities are expressed as constraints that subsequently  compiled into a Situation Calculus action theory which are used to reason about obligations. We differ from this approach by expressing obligations directly in the Situation Calculus so that there is no need of an extra compilation step.  


Event Calculus described in~\cite{kowalski1989logic} may also be used for specifying obligations~(See~\cite{daskalopulu2000modelling}, and~\cite{DBLP:conf/ruleml/HashmiGW14}. The Situation Calculus, however, enjoys the key advantage of the existence of GOLOG 
~(See in~\cite{DBLP:journals/jlp/LevesqueRLLS97}), a Situation Calculus-based programming language for defining complex actions in terms of a set of primitive actions axiomatized in the Situation Calculus. 


\section{Conclusion and Future Work}\label{Conclusion}

We have spelled out the formalization of obligations in the Situation Calculus around a deontic fluent $O$, along with appropriate actions that change its truth value, as well as an appropriate successor state axiom for that deontic fluent.  Furthermore, we stated the correctness of our approach as well as appropriate restrictions of the deontic fluent in the initial situations. Finally, we extended the regression operator, the main reasoning tool of Reiter's  framework, to obligations.

 
One future work is the specification of smart legal contracts in the Situation Calculus. The plan is to expand basic action theories to {\it basic contractual theories} to provide the formal semantics of smart legal contracts to represent legal contracts as processes in the Situation Calculus. Thus, we obtain an  implementable specification that allows to automatically check many properties of the specification using an obligation-based GOLOG interpreter. \\
Another Future work will enhance the current framework by integrating mechanisms to address norm violations and contrary-to-duty (CTD) obligations. Specifically, we intend to introduce explicit violation predicates to signify the non-fulfillment of primary obligations. Concurrently, we will define secondary obligations that are triggered by these violations. This will enable the model to represent reparational duties and more accurately portray normative scenarios. Furthermore, we plan to explore the interplay between these augmented obligations and the underlying action theory of the Situation Calculus, alongside their application in legal reasoning and smart contract verification.

\printbibliography

@String{JACM = "J. ACM" }

@String{Computer = "{IEEE} Computer" }

@String{Academic = "Academic Press" }

@String{Springer = "Springer-Verlag" }

@article{LevesquePR98,
  author    = {Hector J. Levesque and
               Fiora Pirri and
               Raymond Reiter},
  title     = {Foundations for the Situation Calculus},
  journal   = {Electron. Trans. Artif. Intell.},
  volume    = {2},
  pages     = {159--178},
  year      = {1998},
  url       = {http://www.ep.liu.se/ej/etai/1998/005/},
  bibsource = {dblp computer science bibliography, https://dblp.org}
}

@article{pirri1999some,
  title={Some contributions to the metatheory of the situation calculus},
  author={Pirri, Fiora and Reiter, Ray},
  journal={Journal of the ACM (JACM)},
  volume={46},
  number={3},
  pages={325--361},
  year={1999},
  publisher={ACM New York, NY, USA}
}

@article{reiter1991frame,
  title={The frame problem in the situation calculus: A simple solution (sometimes) and a completeness result for goal regression.},
  author={Reiter, Raymond},
  journal={Artificial and Mathematical Theory of Computation},
  volume={3},
  year={1991},
  publisher={Citeseer}
}

@inproceedings{DBLP:conf/birthday/Reiter91,
  author    = {Raymond Reiter},
  editor    = {Vladimir Lifschitz},
  title     = {The Frame Problem in the Situation Calculus: {A} Simple Solution (Sometimes) and a Completeness Result for Goal Regression},
  booktitle = {Artificial and Mathematical Theory of Computation, Papers in Honor of John McCarthy on the occasion of his sixty-fourth birthday},
  pages     = {359--380},
  publisher = {Academic Press / Elsevier},
  year      = {1991},
  url       = {https://doi.org/10.1016/b978-0-12-450010-5.50026-8},
  doi       = {10.1016/b978-0-12-450010-5.50026-8},
  bibsource = {dblp computer science bibliography, https://dblp.org}
}

@article{macleod20002007,
Author = {MacLeod, W. Bentley},
Title = {Reputations, Relationships, and Contract Enforcement},
Journal = {Journal of Economic Literature},
Volume = {45},
Number = {3},
Year = {2007},
Month = {September},
Pages = {595-628},
DOI = {10.1257/jel.45.3.595},
URL = {https://www.aeaweb.org/articles?id=10.1257/jel.45.3.595}}

@article{DBLP:journals/dss/Lee88,
  author       = {Ronald M. Lee},
  title        = {A logic model for electronic contracting},
  journal      = {Decis. Support Syst.},
  volume       = {4},
  number       = {1},
  pages        = {27--44},
  year         = {1988},
  url          = {https://doi.org/10.1016/0167-9236(88)90096-6},
  doi          = {10.1016/0167-9236(88)90096-6},
  bibsource    = {dblp computer science bibliography, https://dblp.org}
}

@inproceedings{sperotto2019ontology,
  title={Ontology-based legal system in multi-agents systems},
  author={Sperotto, F{\'a}bio Aiub and Belchior, Mairon and de Aguiar, Marilton Sanchotene},
  booktitle={Mexican International Conference on Artificial Intelligence},
  pages={507--521},
  year={2019},
  organization={Springer}
}

@article{drumond2008multi,
  title={A multi-agent legal recommender system},
  author={Drumond, Lucas and Girardi, Rosario},
  journal={Artificial intelligence and law},
  volume={16},
  number={2},
  pages={175--207},
  year={2008},
  publisher={Springer}
}

@book{brown2014frame,
  title={The frame problem in artificial intelligence: Proceedings of the 1987 workshop},
  author={Brown, Frank M},
  year={2014},
  publisher={Morgan Kaufmann}
}

@incollection{hayes1981frame,
  title={The frame problem and related problems in artificial intelligence},
  author={Hayes, Patrick J},
  booktitle={Readings in artificial intelligence},
  pages={223--230},
  year={1981},
  publisher={Elsevier}
}

@article{prakken2015law,
  title={Law and logic: A review from an argumentation perspective},
  author={Prakken, Henry and Sartor, Giovanni},
  journal={Artificial intelligence},
  volume={227},
  pages={214--245},
  year={2015},
  publisher={Elsevier}
}

@article{hansson2006ideal,
  title={Ideal worlds—wishful thinking in deontic logic},
  author={Hansson, Sven Ove},
  journal={Studia logica},
  volume={82},
  number={3},
  pages={329--336},
  year={2006},
  publisher={Springer}
}

@incollection{hintikka1971some,
  title={Some main problems of deontic logic},
  author={Hintikka, Jaakko},
  booktitle={Deontic logic: Introductory and systematic readings},
  pages={59--104},
  year={1971},
  publisher={Springer}
}

@article{DBLP:journals/auasjlog/GovernatoriR06,
  author       = {Guido Governatori and
                  Antonino Rotolo},
  title        = {Logic of Violations: {A} Gentzen System for Reasoning with Contrary-To-Duty
                  Obligations},
  journal      = {Australas. J. Log.},
  volume       = {4},
  year         = {2006},
  url          = {https://doi.org/10.26686/ajl.v4i0.1780},
  doi          = {10.26686/AJL.V4I0.1780},
  bibsource    = {dblp computer science bibliography, https://dblp.org}
}

@inproceedings{DBLP:conf/deon/GovernatoriOCR16,
  author       = {Guido Governatori and
                  Francesco Olivieri and
                  Erica Calardo and
                  Antonino Rotolo},
  editor       = {Olivier Roy and
                  Allard M. Tamminga and
                  Malte Willer},
  title        = {Sequence Semantics for Norms and Obligations},
  booktitle    = {Deontic Logic and Normative Systems - 13th International Conference,
                  {DEON} 2016, Bayreuth, Germany, July 18-21, 2018},
  pages        = {93--108},
  publisher    = {College Publications},
  year         = {2016},
  bibsource    = {dblp computer science bibliography, https://dblp.org}
}

@inproceedings{DBLP:conf/ruleml/HashmiGW14,
	title={Modeling obligations with event-calculus},
	author={Hashmi, Mustafa and Governatori, Guido and Wynn, Moe Thandar},
	booktitle={International Symposium on Rules and Rule Markup Languages for the Semantic Web},
	pages={296--310},
	year={2014},
	organization={Springer}
}

@article{DBLP:journals/jacm/PirriR99,
	author       = {Fiora Pirri and
	Raymond Reiter},
	title        = {Some Contributions to the Metatheory of the Situation Calculus},
	journal      = {J. {ACM}},
	volume       = {46},
	number       = {3},
	pages        = {325--361},
	year         = {1999},
	% url          = {https://doi.org/10.1145/316542.316545},
	% doi          = {10.1145/316542.316545},
	timestamp    = {Wed, 14 Nov 2018 10:35:25 +0100},
	% url      = {https://dblp.org/rec/journals/jacm/PirriR99.bib},
	bibsource    = {dblp computer science bibliography, https://dblp.org}
}

@article{DBLP:journals/tocl/Reiter01,
	author       = {Raymond Reiter},
	title        = {On knowledge-based programming with sensing in the situation calculus},
	journal      = {{ACM} Trans. Comput. Log.},
	volume       = {2},
	number       = {4},
	pages        = {433--457},
	year         = {2001},
	% url          = {https://doi.org/10.1145/383779.383780},
	% doi          = {10.1145/383779.383780},
	timestamp    = {Tue, 04 Jun 2019 16:08:56 +0200},
	% url      = {https://dblp.org/rec/journals/tocl/Reiter01.bib},
	bibsource    = {dblp computer science bibliography, https://dblp.org}
}

@inproceedings{DBLP:conf/kr/ClassenD20,
  author       = {Jens Cla{\ss}en and
                  James P. Delgrande},
  title        = {Dyadic Obligations over Complex Actions as Deontic Constraints in
                  the Situation Calculus},
  booktitle    = {Proceedings of the 17th International Conference on Principles of
                  Knowledge Representation and Reasoning, {KR} 2020, Rhodes, Greece,
                  September 12-18, 2020},
  pages        = {253--263},
  year         = {2020},
  % url          = {https://doi.org/10.24963/kr.2020/26},
  % doi          = {10.24963/KR.2020/26},
  timestamp    = {Fri, 29 Jan 2021 19:06:04 +0100},
  % url      = {https://dblp.org/rec/conf/kr/ClassenD20.bib},
  bibsource    = {dblp computer science bibliography, https://dblp.org}
}

@inproceedings{daskalopulu2000modelling,
  title={Modelling legal contracts as processes},
  author={Daskalopulu, Aspassia},
  booktitle={Proceedings 11th International Workshop on Database and Expert Systems Applications},
  pages={1074--1079},
  year={2000},
  organization={IEEE}
}

@inbook{Hintikka70,
  author    = {Hintikka, Jaakko},
  editor    = {Risto Hilpinen},
  title     = {Some Main Problems of Deontic Logic}, 
  bookTitle = {Deontic Logic: Introductory and Systematic Readings},
  year      = {1970},
  publisher= {Springer Netherlands},
  address = {Dordrecht, Holland},
  pages     = {59--104}

}

@inproceedings{DBLP:conf/aaai/ScherlL93,
  author    = {Richard B. Scherl and
               Hector J. Levesque},
  title     = {The Frame Problem and Knowledge-Producing Actions},
  booktitle = {Proceedings of the 11th National Conference on Artif. Intell.,
               Washington, DC, July 11-15, 1993},
  pages     = {689--695},
  publisher = {{AAAI} Press},
  year      = {1993},
  % url       = {http://www.aaai.org/Library/AAAI/1993/aaai93-103.php},
  timestamp = {Tue, 11 Dec 2012 17:40:41 +0100},
  % url   = {https://dblp.org/rec/conf/aaai/ScherlL93.bib},
  bibsource = {dblp computer science bibliography, https://dblp.org}
}

@article{DBLP:journals/jlp/LevesqueRLLS97,
  author    = {Hector J. Levesque and
               Raymond Reiter and
               Yves Lesp{\'{e}}rance and
               Fangzhen Lin and
               Richard B. Scherl},
  title     = {{GOLOG:} {A} Logic Programming Language for Dynamic Domains},
  journal   = {J. Log. Program.},
  volume    = {31},
  number    = {1-3},
  pages     = {59--83},
  year      = {1997},
  % url       = {https://% doi.org/10.1016/S0743-1066(96)00121-5},
  % doi       = {10.1016/S0743-1066(96)00121-5},
  timestamp = {Wed, 17 Feb 2021 08:54:34 +0100},
  % url   = {https://dblp.org/rec/journals/jlp/LevesqueRLLS97.bib},
  bibsource = {dblp computer science bibliography, https://dblp.org}
}

@article{DBLP:journals/etai/LevesquePR98,
  author    = {Hector J. Levesque and
               Fiora Pirri and
               Raymond Reiter},
  title     = {Foundations for the Situation Calculus},
  journal   = {Electron. Trans. Artif. Intell.},
  volume    = {2},
  pages     = {159--178},
  year      = {1998},
  % url       = {http://www.ep.liu.se/ej/etai/1998/005/},
  timestamp = {Wed, 05 May 2004 09:50:02 +0200},
  % url   = {https://dblp.org/rec/journals/etai/LevesquePR98.bib},
  bibsource = {dblp computer science bibliography, https://dblp.org}
}

@incollection{kowalski1989logic,
  title={A logic-based calculus of events},
  author={Kowalski, Robert and Sergot, Marek},
  booktitle={Foundations of knowledge base management},
  pages={23--55},
  year={1989},
  publisher={Springer}
}

@book{reiter2001knowledge,
  title={Knowledge in action: logical foundations for specifying and implementing dynamical systems},
  author={Reiter, Raymond},
  year={2001},
  publisher={MIT press}
}

@inproceedings{DBLP:conf/otm/KruijffW17,
title={Ontologies for commitment-based smart contracts},
author={De Kruijff, Joost and Weigand, Hans},
booktitle={OTM 2017: Confederated International Conferences: CoopIS, C\&TC, and ODBASE 2017, Rhodes, Greece, October 23-27, 2017, Proceedings, Part II},
pages={383--398},
year={2017},
organization={Springer}
}

@inproceedings{DBLP:conf/deon/DemolombeH04,
title={Obligation change in dependence logic and situation calculus},
author={Demolombe, Robert and Herzig, Andreas},
booktitle={7th International Workshop on Deontic Logic in Computer Science, DEON 2004, Madeira, Portugal, May 26-28, 2004. Proceedings 7},
pages={57--73},
year={2004},
organization={Springer}
}

@book{vonWright1951-VONAEI-2,
address = {Amsterdam},
author = {G. H. von Wright},
editor = {},
publisher = {North-Holland Pub. Co.},
title = {An Essay in Modal Logic},
year = {1951}
}

@incollection{moore1981reasoning,
title={Reasoning about knowledge and action},
author={Moore, Robert C},
booktitle={Readings in Artif. Intell.},
pages={473--477},
year={1981},
publisher={Elsevier}
}

@techreport{mccarthy1963situations,
title={Situations, actions, and causal laws},
author={McCarthy, John},
year={1963},
institution={Stanford University, Dept of Computer Science}
}

@inproceedings{DBLP:conf/ecai/Demolombe04,
author    = {Robert Demolombe},
title     = {From Belief Change to Obligation Change in the Situation Calculus},
booktitle = {Proceedings of the 16th Eureopean Conference on Artif. Intell.,
ECAI'2004, {PAIS} 2004, Valencia, Spain, August 22-27, 2004},
pages     = {991--992},
publisher = {{IOS} Press},
year      = {2004},
timestamp = {Fri, 11 May 2018 12:42:30 +0200},
% url   = {https://dblp.org/rec/conf/ecai/Demolombe04.bib},
bibsource = {dblp computer science bibliography, https://dblp.org}
}

@article{DBLP:journals/logcom/LinR94,
author    = {Fangzhen Lin and
Raymond Reiter},
title     = {State Constraints Revisited},
journal   = {J. Log. Comput.},
volume    = {4},
number    = {5},
pages     = {655--678},
year      = {1994},
% url       = {https://% doi.org/10.1093/logcom/4.5.655},
% doi       = {10.1093/logcom/4.5.655},
timestamp = {Wed, 17 May 2017 14:25:56 +0200},
% url   = {https://dblp.org/rec/journals/logcom/LinR94.bib},
bibsource = {dblp computer science bibliography, https://dblp.org}
}
\end{document}